\documentclass[12pt]{iopart}

\usepackage{mathrsfs}

\usepackage{graphicx}
\usepackage{dcolumn}
\usepackage{bm}

\expandafter\let\csname equation*\endcsname\undefined
\expandafter\let\csname endequation*\endcsname\undefined
\usepackage{amsmath}

\usepackage{amsthm}
\usepackage{amssymb}
\usepackage{xfrac}
\usepackage[mathcal]{euscript}
\usepackage{tikz}
\usetikzlibrary{arrows.meta,shapes.arrows}
\usepackage{mathtools}


\usepackage{url}
\usepackage{hyperref}
\usepackage{color}
\definecolor{refcolor}{RGB}{0,0,190}
\hypersetup{
    colorlinks,
    citecolor=refcolor,
    filecolor=refcolor,
    linkcolor=refcolor,
    urlcolor=refcolor
}

\usepackage{booktabs}

\usepackage{bookmark}
\bookmarksetup{
  numbered, 
  open,
}

\theoremstyle{definition}
\newtheorem{theorem}{Theorem}
\newtheorem{proposition}{Proposition}

\newtheorem{corollary}{Corollary}

\theoremstyle{definition}
\newtheorem{remark}{Remark}
\newtheorem{example}{Example}

\newtheorem{type}{Type}

\theoremstyle{definition}
\newtheorem{observation}{Observation}

\newtheorem{definition}{Definition}

\makeatletter
\newcommand{\neutralize}[1]{\expandafter\let\csname c@#1\endcsname\count@}
\makeatother

\newtheorem{objection}{Objection}
\newenvironment{objectionbis}[1]
  {%
	 \neutralize{objection}\phantomsection
   \begin{objection}}
  {\end{objection}}
	
\newtheorem{reply}{Reply to Objection}
\newenvironment{replyToObjection}[1]
  {%
	 \neutralize{reply}\phantomsection
   \begin{reply}}
  {\end{reply}}

\theoremstyle{definition}
\newtheorem{defCustom}{Definition}
\renewcommand{\thedefCustom}{\arabic{definition}}
\makeatletter
\newcommand{\setdefCustomtag}[1]{
  \let\oldthedefCustom\thedefCustom
  \renewcommand{\thedefCustom}{#1}
  \g@addto@macro\enddefCustom{
    \global\let\thedefCustom\oldthedefCustom}
  }
\makeatother

\newtheorem{thesisCustom}{Thesis}
\renewcommand{\thethesisCustom}{\arabic{thesis}}
\makeatletter
\newcommand{\setthesisCustomtag}[1]{
  \let\oldthethesisCustom\thethesisCustom
  \renewcommand{\thethesisCustom}{#1}
  \g@addto@macro\endthesisCustom{
    \global\let\thethesisCustom\oldthethesisCustom}
  }
\makeatother

\renewcommand{\thecondition}{\arabic{condition}}
\makeatletter
\newcommand{\setconditiontag}[1]{
  \let\oldthecondition\thecondition
  \renewcommand{\thecondition}{#1}
  \g@addto@macro\endcondition{
    \global\let\thecondition\oldthecondition}
  }
\makeatother

\renewcommand{\theassumption}{\arabic{assumption}}
\makeatletter
\newcommand{\setassumptiontag}[1]{
  \let\oldtheassumption\theassumption
  \renewcommand{\theassumption}{#1}
  \g@addto@macro\endassumption{
    \global\let\theassumption\oldtheassumption}
  }
\makeatother

\renewcommand{\theclaim}{\arabic{claim}}
\makeatletter
\newcommand{\setclaimtag}[1]{
  \let\oldtheclaim\theclaim
  \renewcommand{\theclaim}{#1}
  \g@addto@macro\endclaim{
    \global\let\theclaim\oldtheclaim}
  }
\makeatother

\theoremstyle{remark}

\renewcommand{\thepointItem}{\quad\arabic{pointItem}}
\makeatletter
\newcommand{\setpointItemtag}[1]{
  \let\oldthepointItem\thepointItem
  \renewcommand{\thepointItem}{#1}
  \g@addto@macro\endpointItem{
    \global\let\thepointItem\oldthepointItem}
  }
\makeatother


\begin{document}


\newcommand{\pbref}[1]{\ref{#1} (\nameref*{#1})}

\def\({\big(}
\def\){\big)}

\newcommand{\tn}{\textnormal}
\newcommand{\ds}{\displaystyle}
\newcommand{\dsfrac}[2]{\displaystyle{\frac{#1}{#2}}}

\newcommand{\boplus}{\textstyle{\bigoplus}}
\newcommand{\botimes}{\textstyle{\bigotimes}}
\newcommand{\bcup}{\textstyle{\bigcup}}
\newcommand{\bsqcup}{\textstyle{\bigsqcup}}
\newcommand{\bcap}{\textstyle{\bigcap}}

\newcommand{\struct}{\mc{S}}
\newcommand{\kind}{\mc{K}}

\newcommand{\dddots}{\rotatebox[origin=t]{135}{$\cdots$}}

\newcommand{\statespace}{\mathcal{S}}
\newcommand{\hilbert}{\mathcal{H}}
\newcommand{\vectorspace}{\mathcal{V}}
\newcommand{\mc}[1]{\mathcal{#1}}
\newcommand{\mf}[1]{\mathfrak{#1}}
\newcommand{\dU}{\wh{\mc{U}}}

\newcommand{\wh}[1]{\widehat{#1}}
\newcommand{\dwh}[1]{\wh{\rule{0ex}{1.3ex}\smash{\wh{\hfill{#1}\,}}}}

\newcommand{\wt}[1]{\widetilde{#1}}
\newcommand{\wht}[1]{\widehat{\widetilde{#1}}}

\newcommand{\qmU}{$\mathscr{U}$}
\newcommand{\qmR}{$\mathscr{R}$}
\newcommand{\qmUR}{$\mathscr{UR}$}
\newcommand{\qmDR}{$\mathscr{DR}$}

\newcommand{\R}{\mathbb{R}}
\newcommand{\C}{\mathbb{C}}
\newcommand{\Z}{\mathbb{Z}}
\newcommand{\K}{\mathbb{K}}
\newcommand{\N}{\mathbb{N}}
\newcommand{\Prj}{\mathcal{P}}
\newcommand{\abs}[1]{|#1|}

\newcommand{\de}{\text{d}}

\newcommand{\ie}{\textit{i.e.}\ }
\newcommand{\vs}{\textit{vs.}\ }
\newcommand{\eg}{\textit{e.g.}\ }
\newcommand{\cf}{\textit{cf.}\ }
\newcommand{\etc}{\textit{etc}}

\newcommand{\Span}{\tn{span}}
\newcommand{\pde}{PDE}
\newcommand{\U}{\tn{U}}
\newcommand{\SU}{\tn{SU}}
\newcommand{\GL}{\tn{GL}}

\newcommand{\schrod}{Schr\"odinger}
\newcommand{\vonneum}{Liouville-von Neumann}
\newcommand{\ks}{Kochen-Specker}
\newcommand{\leggarg}{Leggett-Garg}
\newcommand{\bra}[1]{\langle#1|}
\newcommand{\ket}[1]{|#1\rangle}
\newcommand{\kett}[1]{|\!\!|#1\rangle\!\!\rangle}
\newcommand{\proj}[1]{\ket{#1}\bra{#1}}
\newcommand{\braket}[2]{\langle#1|#2\rangle}
\newcommand{\ketbra}[2]{|#1\rangle\langle#2|}
\newcommand{\expectation}[1]{\langle#1\rangle}
\newcommand{\Herm}{\tn{Herm}}
\newcommand{\Sym}[1]{\tn{Sym}_{#1}}
\newcommand{\meanvalue}[2]{\langle{#1}\rangle_{#2}}
\newcommand{\Prob}{\tn{Prob}}
\newcommand{\kjj}[3]{#1\!:\!#2,#3}
\newcommand{\jk}[2]{#1,#2}
\newcommand{\JK}{\mf{j}}

\newcommand{\weightU}[5]{\big[{#2}{}_{#3}\overset{#1}{\rightarrow}{#4}{}_{#5}\big]}
\newcommand{\weightUT}[8]{\big[{#3}{}_{#4}\overset{#1}{\rightarrow}{#5}{}_{#6}\overset{#2}{\rightarrow}{#7}{}_{#8}\big]}
\newcommand{\weight}[4]{\weightU{}{#1}{#2}{#3}{#4}}
\newcommand{\weightT}[6]{\weightUT{}{}{#1}{#2}{#3}{#4}{#5}{#6}}

\newcommand{\btimes}{\boxtimes}
\newcommand{\btimess}{{\boxtimes_s}}

\newcommand{\h}{\mathbf{(2\pi\hbar)}}
\newcommand{\x}{\mathbf{x}}
\newcommand{\A}{\mathbf{a}}
\newcommand{\xThree}{\boldsymbol{x}}
\newcommand{\z}{\mathbf{z}}
\newcommand{\q}{\mathbf{q}}
\newcommand{\p}{\mathbf{p}}
\newcommand{\0}{\mathbf{0}}
\newcommand{\annih}{\widehat{\mathbf{a}}}

\newcommand{\cs}{\mathscr{C}}
\newcommand{\ps}{\mathscr{P}}
\newcommand{\xhat}{\widehat{\x}}
\newcommand{\phat}{\widehat{\mathbf{p}}}
\newcommand{\fqproj}[1]{\Pi_{#1}}
\newcommand{\cqproj}[1]{\wh{\Pi}_{#1}}
\newcommand{\cproj}[1]{\wh{\Pi}^{\perp}_{#1}}

\newcommand{\M}{\mathbb{E}_3}
\newcommand{\D}{\mathbf{D}}
\newcommand{\dn}{\tn{d}}
\newcommand{\db}{\mathbf{d}}
\newcommand{\n}{\mathbf{n}}
\newcommand{\m}{\mathbf{m}}
\newcommand{\V}[1]{\mathbb{V}_{#1}}
\newcommand{\F}[1]{\mathcal{F}_{#1}}
\newcommand{\Fvacuumfield}{\widetilde{\mathcal{F}}^0}
\newcommand{\nD}[1]{|{#1}|}
\newcommand{\Lin}{\mathcal{L}}
\newcommand{\End}{\tn{End}}
\newcommand{\vbundle}[4]{{#1}\to {#2} \stackrel{\pi_{#3}}{\to} {#4}}
\newcommand{\vbundlex}[1]{\vbundle{V_{#1}}{E_{#1}}{#1}{M_{#1}}}
\newcommand{\rep}{\rho_{\scriptscriptstyle\btimes}}

\newcommand{\intl}[1]{\int\limits_{#1}}

\newcommand{\moyalBracket}[1]{\{\mskip-5mu\{#1\}\mskip-5mu\}}

\newcommand{\Hint}{H_{\tn{int}}}

\newcommand{\quot}[1]{``#1''}

\def\sref #1{\S\ref{#1}}

\newcommand{\dBB}{de Broglie--Bohm}
\newcommand{\dBBt}{{\dBB} theory}
\newcommand{\pwt}{pilot-wave theory}
\newcommand{\PWT}{PWT}
\newcommand{\NRQM}{{\textbf{NRQM}}}

\newcommand{\TQS}{\ref{def:TQS}}
\newcommand{\HSF}{\ref{thesis:HSF}}

\newcommand{\todo}[1]{\textcolor{red}{$\Rightarrow$} \textcolor{blue}{#1}\PackageWarning{TODO:}{#1!}}

\clearpage

\title[Empirical adequacy of the time operator]{Empirical adequacy of the time operator canonically conjugate to a Hamiltonian generating translations}

\author{Ovidiu Cristinel Stoica}
\address{
 Dept. of Theoretical Physics, NIPNE---HH, Bucharest, Romania. \\
	Email: \href{mailto:cristi.stoica@theory.nipne.ro}{cristi.stoica@theory.nipne.ro},  \href{mailto:holotronix@gmail.com}{holotronix@gmail.com}
	}%

\date{\today}

\begin{abstract}
To admit a canonically conjugate time operator, the Hamiltonian has to be a generator of translations (like the momentum operator generates translations in space), so its spectrum must be unbounded. But the Hamiltonian governing our world is thought to be bounded from below. Also, judging by the number of fields and parameters of the Standard Model, the Hamiltonian seems much more complicated.

In this article I give examples of worlds governed by Hamiltonians generating translations. They can be expressed as a partial derivative operator just like the momentum operator, but when expressed in function of other observables they can exhibit any level of complexity. The examples include any quantum world realizing a standard ideal measurement, any quantum world containing a clock or a free massless fermion, the quantum representation of any deterministic time-reversible dynamical system without time loops, and any quantum world that cannot return to a past state.

Such worlds are as sophisticated as our world, but they admit a time operator. I show that, despite having unbounded Hamiltonian, they do not decay to infinite negative energy any more than any quantum or classical world.

Since two such quantum systems of the same Hilbert space dimension are unitarily equivalent even if the physical content of their observables is very different, they are concrete counterexamples to Hilbert Space Fundamentalism (HSF). Taking the observables into account removes the ambiguity of HSF and the clock ambiguity problem attributed to the Page-Wootters formalism, also caused by assuming HSF.

These results provide additional motivations to restore the spacetime symmetry in the formulation of Quantum Mechanics and for the Page-Wootters formalism.
\end{abstract}

\vspace{2pc}
\noindent{\it Keywords}:
{Quantum time operator; quantum measurements; dynamical systems; Koopman representation; Page-Wootters formalism; Hilbert space fundamentalism.}

\maketitle

\section{Introduction}
\label{s:intro}

Unlike the Theory of Relativity, Quantum Mechanics seems to treat space and time completely differently. While the position operator $\wh{x}$ is the canonical conjugate of the momentum operator $\wh{p}_x$, a generic Hamiltonian operator $\wh{H}$ does not admit a self-adjoint canonically conjugate operator $\wh{\tau}$ to represent time as a dynamical variable or observable. According to Pauli (\cite{Pauli1980GeneralPrinciplesOfQuantumMechanics}, footnote 2, page 63), 
\begin{quote}
In the older literature on Quantum Mechanics, we often find the operator equation
\begin{equation*}
H t-t H=\frac{h}{i}I,
\end{equation*}
[...] It is generally not possible, however, to construct a Hermitian operator (e.g. as a function of $p$ and $q$) which satisfies this equation. This is so because, from the C.R. [commutation relation] written above, it follows that $H$ possesses continuously all eigenvalues from $-\infty$ to $+\infty$ (cf. Dirac, Quantum Mechanics, First edition (1930), 34 and 56) whereas on the other hand, discrete eigenvalues of $H$ can be present.
\emph{We, therefore, conclude that the introduction of an operator $t$ is basically forbidden and the time $t$ must necessarily be considered as an ordinary number [...] in Quantum Mechanics (cf. for this E. {\schrod}, Berl. Ber. (1931) p. 238).}
\end{quote}

Since the discovery of the position-momentum energy relations by Heisenberg \cite{Heisenberg1927Uncertainty}, it was assumed that similar relations should hold between time and energy, due to the similar role played by space and time in relativity.
This belief was criticized by Pauli \cite{Pauli1980GeneralPrinciplesOfQuantumMechanics}.
Nevertheless, forms of time-energy uncertainty that do not violate Pauli's observation were shown to be possible, due to the fact that they can be interpreted to refer to various context, for example involving the characteristic times of average expectation values
\cite{MandelstamTamm1945TheUncertaintyRelationBetweenEnergyAndTimeInNonrelativisticQuantumMechanics}, the \emph{time of arrival} \cite{Razavy1971TimeOfArrivalOperator}, \cite{MugaEtal2007TimeInQuantumMechanicsI}, and the \emph{traversal} or \emph{tunneling time} (found to require, apparently, exceeding the speed of light) \cite{Hartman1962TunnelingOfAWavePacket}. Wigner clarified that the time-energy uncertainty relations should be understood to apply to the life-time of a definite state of a system, and examined the contrast with Heisenberg's position-momentum uncertainty \cite{Wigner1988OnTheTimeEnergyUncertaintyRelation}.
Aharonov and Bohm found situations where
the energy can be measured with arbitrary precision in an arbitrarily short time interval \cite{AharonovBohm1961TimeInTheQuantumTheoryAndTheUncertaintyRelationForTimeAndEnergy},
and Fock contradicted this on the grounds that it assumes ``the introduction of a field which does not obey the uncertainty relation'' \cite{Fock1962CriticismOfAnAttemptToDisproveTheUncertaintyRelationBetweenTimeAndEnergy}.
Time-energy uncertainty relations based on positive operator-valued measures are discussed in \cite{BuschGrabowskiLahti1995OperationalQuantumPhysics,Busch2002Time-energy-uncertainty-relation}.
For clarifications of various meanings of the time-energy uncertainty relations see Hilgevoord \cite{Hilgevoord1996TheUncertaintyPrincipleForEnergyAndTimeI,Hilgevoord1998TheUncertaintyPrincipleForEnergyAndTimeII,Hilgevoord2005TimeInQuantumMechanicsAStoryOfConfusion}.
There are several thorough reviews of these problems \cite{Jammer1974ThePhilosophyOfQuantumMechanics,BauerMello1978TheTimeEnergyUncertaintyRelation,Busch1990OnTheEnergyTimeUncertaintyRelationI,Busch1990OnTheEnergyTimeUncertaintyRelationII,Hilgevoord2005TimeInQuantumMechanicsAStoryOfConfusion,AltaieBeigeHodgson2022TimeAndQuantumClocksAReviewOfRecentDevelopments}, also see some important collections like \cite{NozKim1988SpecialRelativityAndQuantumTheoryACollectionOfPapersOnThePoincareGroup,MugaEtal2007TimeInQuantumMechanicsI,MugaEtal2009TimeInQuantumMechanicsII}.

Here we will explore the properties of quantum systems whose  Hamiltonian admits a canonical conjugate observable that can be understood as time, territory considered ``basically forbidden'' by Pauli, as per the above quote. The aim is to see whether this Hamiltonian can govern the dynamics of a world like the one we observe.
\setdefCustomtag{TQS}
\begin{defCustom}
\label{def:TQS}
A \emph{translational quantum system} is a closed quantum system with the Hamiltonian operator of the form
\begin{equation}
\label{eq:time_translation_Hamiltonian}
\wh{H}_{\tau}=-i\hbar\frac{\partial\ }{\partial\tau}.
\end{equation}

The evolution of the state vectors $\ket{\psi(t)}$ in the Hilbert space $\hilbert$ is given, as usual, by the {\schrod} equation
\begin{equation}
\label{eq:schrod-TQS}
i\hbar\frac{d\ }{d t}\ket{\psi(t)}=\wh{H}_{\tau}\ket{\psi(t)}.
\end{equation}

Its solutions for $\ket{\psi(0)}=\ket{\psi_0}$ are $\ket{\psi(t)}=\wh{U}(t)\ket{\psi_0}$, where $\wh{U}(t):=e^{-\frac{i}{\hbar}t\wh{H}_{\tau}}$ is the evolution operator.
In general,
\begin{equation}
\label{eq:time-evol}
\ket{\psi(t+\Delta t)}=e^{-\frac{i}{\hbar}\Delta t\wh{H}_{\tau}}\ket{\psi(t)}.
\end{equation}

The Hamiltonian $\wh{H}_{\tau}$ admits a canonical conjugate operator $\wh{\tau}$, so that together they satisfy the \emph{canonical commutation relation}
\begin{equation}
\label{eq:CCR-tau}
[\wh{\tau},\wh{H}_{\tau}]=i\hbar\wh{I},
\end{equation}
where $\wh{I}$ is the identity operator on $\hilbert$.

The only allowed states of the system are eigenstates of the operator $\wh{\tau}$.
The unitary evolution operator by $e^{-\frac{i}{\hbar}t\wh{H}_{\tau}}$ of an eigenstate corresponding to an eigenvalue $\tau$ results in an eigenstate corresponding to the eigenvalue $\tau+t$.
\end{defCustom}

We will see that $\wh{\tau}$ can play the role of \emph{time operator}.

\begin{remark}[Motivation for {\TQS}]
\label{rem:motivation}
There are several reasons to embark in exploring the possibility that our world is a {\TQS}.
Mainly, it seems that relativity requires that time and energy are in the same relation as position and momentum, both to ensure similar uncertainty relations, and as pairs of canonically conjugate observables.
But the \emph{Stone-von Neumann Theorem}, which (see {\eg} \cite{Hall2013QuantumTheoryForMathematicians}, p. 239)
\begin{quote}
states that any two self-adjoint operators $A$ and $B$ satisfying $[A,B] = i\hbar I$ ``look like'' [author's note: {\ie} ``it is unitarily equivalent to a direct sum of''] several copies of the standard position and momentum operators acting on $L^2(\R)$
\end{quote}
implies that, if the Hilbert space is separable, the system can only be a {\TQS}.

Another reason comes from \emph{Canonical Quantum Gravity} \cite{Dewitt1967QuantumTheoryOfGravityI_TheCanonicalTheory}, whose solutions seem to be ``timeless''. Page and Wootters proposed a formalism to accommodate this apparent timelessness with a relational notion of time based on correlations between a system and a clock \cite{PageWootters1983EvolutionWithoutEvolution}. For critical assessments of their proposal see \cite{Kuchar2011TimeAndInterpretationsOfQuantumGravity,Isham1993CanonicalQuantumGravityAndTheProblemOfTime}.
The relation between the {\TQS} and the Page-Wootters formalism will be shown in Sections \sref{s:clock} and \sref{s:time-operator-clock}.

Another reason to explore {\TQS}s is the revived interest in the Page-Wootters formalism \cite{PageWootters1983EvolutionWithoutEvolution}, this time motivated by the recent results regarding \emph{indefinite causal order}.
Indefinite causal order occurs when considering superpositions of states consisting of subsystems whose interactions take place in opposite causal orders \cite{Hardy2009QuantumGravityComputersOnTheTheoryOfComputationWithIndefiniteCausalStructure,OreshkovCostaBrukner2012QuantumCorrelationsWithNoCausalOrder,ProcopioEtAl2015ExperimentalSuperpositionOfOrdersOfQuantumGates}.
A possible and perhaps the most natural solution is to quantize time, and in particular to use the Page-Wootters formalism \cite{BaumannKrummGuerinBrukner2022NoncausalPageWoottersCircuits,SuleymanovCohen2023QuantumFramesOfReferenceAndRelationalFlowOfTime}.

A secondary reason (simply because it is unknown) is a recent result, showing that unless the world is a {\TQS}, there is always a nonvanishing probability that the world will return to a previous state \cite{Stoica2022ProblemOfIrreversibleChangeInQuantumMechanics}.
The proof from \cite{Stoica2022ProblemOfIrreversibleChangeInQuantumMechanics} is very simple: if the entire system evolves into a state that is not orthogonal to a previous state, the total state vector has a nonvanishing component corresponding to the previous state. Whether we resolve the superposition by appealing to the wavefunction collapse or to decoherence into branches, there is a nonvanishing probability that the system will return to a previous state.
To prevent this, it can be shown that, at least for the Hilbert subspace spanned by the irreversible histories, the Hamiltonian has to be the generator of some translations, so the world must be describable as a {\TQS}.
Therefore, a system that is not a {\TQS} violates the Second Law of Thermodynamics.
The following quote by Eddington gives therefore another motivation to reconsider the possibility that the world is a {\TQS} \cite{Eddington1928NatureOfThePhysicalWorld},
\begin{quote}
If your theory is found to be against the second law of thermodynamics I give you no hope; there is nothing for it but to collapse in deepest humiliation.
\end{quote}

Finally, another reason to study the {\TQS}s is that they can be very diverse and sophisticated, much like our world, as we shall see.
\end{remark}

We anticipate some possible objections against {\TQS}s, discussed in Sections \sref{s:HSF}--\sref{s:energy-extraction}:

\begin{objection}
\label{obj:simple}
The {\TQS} is unique and simple, so what can be interesting about this?
\end{objection}

\begin{objection}
\label{obj:unrealistic}
While we do not yet know the Hamiltonian of the unified theory, and hence of the entire world, we know the laws governing the Standard Model of particle physics, and the Hamiltonian has to be much more complex than \eqref{eq:time_translation_Hamiltonian}.
\end{objection}

Perhaps the best known objection, which is more general, is the following \cite{Malament1996InDefenseOfDogmaWhyCannotBeARelativisticQuantumMechanicsOfParticles}:

\begin{objection}
\label{obj:negative-energy-descent}
If the Hamiltonian of the entire world were not bounded from below, its state would decay towards infinite negative energy states.
Such a world would contain systems able to generate free energy, for example by charging infinitely many batteries.
Therefore, the spectrum of the Hamiltonian operator of the world cannot be \eqref{eq:time_translation_Hamiltonian}.
\end{objection}

We will see that there are {\TQS}s of any desired complexity, empirically indistinguishable from our world, which do not decay to infinite negative energy states, or if they decay, the same should happen to systems with the Hamiltonian bounded from below.

We will see that the following are {\TQS}s:

$\bullet$ Any closed quantum system consisting of a measuring device and an observed system realizing the \emph{standard model of ideal quantum measurements} (Section \sref{s:measurements}).
The two systems are assumed to not have free evolution.

$\bullet$ Any closed quantum system having a closed subsystem which is translational. The subsystem may be an ideal clock  (Section \sref{s:clock}) or a sterile massless fermion in a certain state (Section \sref{s:weyl}) or any other translational system.

$\bullet$ The Koopman-von Neumann-Sch\"onberg quantum representation of any deterministic time-reversible dynamical system without time loops (Section \sref{s:dyn_sys}). We will also see that all observables and properties of the original dynamical system and its evolution equations are faithfully encoded as quantum observables.

These quantum systems provide countless counterexamples of unlimited complexity to Objections \ref{obj:simple} and \ref{obj:unrealistic} (Section \sref{s:HSF}) and Objection \ref{obj:negative-energy-descent} (Section \sref{s:energy-extraction}), allowing for the possibility to interpret $\wh{\tau}$ as a time observable canonically conjugate to the Hamiltonian.
In addition, they are concrete counterexamples to the ``Hilbert Space Fundamentalism'' Thesis (\HSF) that the Hamiltonian and the state vector are sufficient to recover the $3$D-space, the tensor product structure, the preferred basis, or any other physical property, proving the necessity of taking into account the physical meaning of the observables (Section \sref{s:HSF}). The clock ambiguity problem often attributed to the Page-Wootters formalism is also due to assuming the thesis (\HSF), and it can be avoided by taking into account the physical meanings of other observables  (\sref{s:time-operator-clock}).

Finally, Section \sref{s:conclusions} summarizes the conclusions regarding the empirical adequacy of the hypothesis that the world is a {\TQS}.

\section{Preliminary remarks}
\label{s:pre-remarks}

\begin{remark}[Terminology. General]
\label{rem:terminology-translation-general}
Translation operators are also named \emph{shift operators} in Functional Analysis.
Consider a function $f(x_1,\ldots,x_j,\ldots)$ of some independent parameters $x_1,\ldots,x_j,\ldots$.
They may be space degrees of freedom, but they may be other degrees of freedom as well.
A \emph{translation operator} $\wh{T}_j\(\Delta x_j\)$ translates any function $f(x_1,\ldots,x_j,\ldots)$ with respect to the parameter $x_j$,
\begin{equation}
\label{eq:translation-op}
\wh{T}_j\(\Delta x_j\)f(x_1,\ldots,x_j,\ldots)=f(x_1,\ldots,x_j-\Delta x_j,\ldots).
\end{equation}

For example, if the parameter $x_j$ is a space coordinate and the function is a wavefunction $\psi(x_1,\ldots,x_j,\ldots)$, the translation operators along $x_j$ are generated by the corresponding momentum operator $\wh{p}_j:=-i\hbar\frac{\partial\ }{\partial x_j}$ (see Figure \ref{fig:translation-xj}),
\begin{equation}
\label{eq:translation-op-generator}
\wh{T}_j\(\Delta x_j\)\psi(x_1,\ldots,x_j,\ldots)=e^{-\frac{i}{\hbar}\Delta x_j\wh{p}_j}\psi(x_1,\ldots,x_j,\ldots)=\psi(x_1,\ldots,x_j-\Delta x_j,\ldots).
\end{equation}

\begin{figure}[!ht]
\begin{flushright}
\begin{tikzpicture}[>={Stealth[length=6pt]},declare function={g(\x)=2*exp(-\x*\x/3);
    xmax=4;xmin=-4;taumax=8;x0=1.5;ymax=2.75;}]
 \node[single arrow,left color=blue!33!white,right color=red!33!white, shift={(1.3 cm,2 cm)}]{\color{black}{$\hspace{0.55 cm}e^{-\frac{i}{\hbar}\Delta\x\cdot\wh{\p}}\hspace{0.55 cm}$}};
 \draw[black] (-4.5,0) edge[->] (taumax,0) (0,0) edge[->] (0,ymax);
 \draw[dashed, black] (-3,0) -- (-3,2.25);
 \draw[dashed, black] (3,0) -- (3,2.25);
 \draw[dashed, black] (-3.75,0.1) -- (0.75,0.1);
 \draw[dashed, black] (-0.75,2) -- (0.75,2);
 \draw[thick, blue] plot[domain=xmin:xmax,samples=51,smooth] (\x,{g(\x)}); 
 \path 
  (-3,0) node[below]{$\x_0-\Delta \x$}
  (0,0) node[below]{$\phantom{\Delta}\x_0\phantom{\Delta}$}
  (3,0) node[below]{$\x_0+\Delta \x$}
  (taumax,0) node[below]{$\x$}
  (0,2.4) node[left,blue]{$\braket{\x_0}{\psi}$}  
  (-3,0.5) node[left,blue]{$\braket{\x_0-\Delta\x}{\psi}$}
  (0.8,0.5) node[left,red]{$\bra{\x_0}e^{-\frac{i}{\hbar}\Delta\x\cdot\wh{\p}}\ket{\psi}$}
	(0,ymax) node[right]{$\psi$};
 \draw[thick, red, shift={(3 cm, 0 cm)}] plot[domain=xmin:xmax,samples=51,smooth] (\x,{g(\x)}); 
 \draw[blue,fill=blue] (-3,0.1) circle [radius=0.05];
 \draw[blue,fill=blue] (0,2) circle [radius=0.05];
 \draw[red,fill=red] (0,0.1) circle [radius=0.05];
 \draw[red,fill=red] (3,2) circle [radius=0.05];
\end{tikzpicture}
\caption{The operator $e^{-\frac{i}{\hbar}\Delta\x\cdot\wh{\p}}$ translates $\psi$ with $\Delta\x$.}
\label{fig:translation-xj}
\end{flushright}
\end{figure}

If a function $\psi(x_1,\ldots,x_j,\ldots)$ is translated with $\Delta x_j$, in the original reference frame the value of the translated function at $x_j=0$ is the value of the original function at $x_j-\Delta x_j$, as in equation \eqref{rem:terminology-translation-general}.
This is akin to a (passive) change of the reference frame, so an eigenvector $\ket{\wt{x}_j}$ of the position operator $\wh{x}_j$ is transformed into the eigenvector $\ket{\wt{x}_j+\Delta x_j}$ (we denote the eigenvectors by the corresponding eigenvalues). We can check in one dimension that indeed $\braket{x}{\wt{x}+\Delta x}=\delta\(x-(\wt{x}+\Delta x)\)=\delta\((x-\wt{x})-\Delta x\)=\wh{T}\(\Delta x\)\delta(x-\wt{x})=\bra{x}\wh{T}\(\Delta x\)\ket{\wt{x}}$.
This is not different from the usual changes of reference frames: the coordinates in the new reference frame are obtained by applying to the coordinates in the old reference frame the inverse of the transformation changing the old reference frame into the new one.
\end{remark}

Let $\ket{\wt{\tau},\A}$ be an eigenstate of $\wh{\tau}$ corresponding to the eigenvalue $\wt{\tau}$, where we denote by $\A=\(a_1,a_2,\ldots\)$ the additional parameters needed to account for the degeneracy of the eigenvalue $\tau$. They are eigenvalues of some Hermitian operators $\wh{\A}=\(\wh{a}_1,\wh{a}_2,\ldots\)$ that can be chosen to form, together with $\wh{\tau}$, a complete set of commuting observables, so they are simultaneously diagonalizable.

There are many ways to choose the operators $\wh{a}_1,\wh{a}_2,\ldots$, resulting in many different representations. The following choice makes explicit how $\wh{H}_{\tau}$ generates translations.

\begin{example}[Parmenidean representation]
\label{ex:parmenidean}
We can choose $\wh{a}_1,\wh{a}_2,\ldots$ to also commute with $\wh{H}_{\tau}$, so that they are conserved.
This is possible because the Hilbert space $\hilbert$ is isomorphic with the tensor product of Hilbert spaces, $\hilbert\cong\hilbert_{C}\otimes\hilbert_{R}$, where $\hilbert_{C}:=L^2\(\R,\C\)$ is the space of square-integrable functions and $\hilbert_{R}$ is another Hilbert space, such that the Hamiltonian has the form $\wh{H}_{\tau}=\wh{H}_{C}\otimes\wh{I}_{R}$, where $\wh{H}_{C}=-i\hbar\frac{\partial\ }{\partial\tau_{C}}$, $\tau_{C}\in\R$, and $\wh{I}_{R}$ is the identity operator on $\hilbert_{R}$.
We choose the operators $\wh{a}_1,\wh{a}_2,\ldots$ of the form $\wh{a}_j:=\wh{I}_{C}\otimes\wh{b}_j$, where $\wh{b}_1,\wh{b}_2,\ldots$ form a complete set of commuting observables on $\hilbert_{R}$.
Then, the action of $\wh{U}(t)$ on $\ket{\wt{\tau},\A}$ changes it into an eigenstate $\ket{\wt{\tau}+t,\A}$ corresponding to the eigenvalue $\wt{\tau}+t$.
Therefore, the solution of equation \eqref{eq:schrod-TQS} with the initial condition $\ket{\psi_0}=\ket{\tau,\A}$ is $\ket{\psi(t)}=\ket{\tau+t,\A}$,
\begin{equation}
\label{eq:time-translation-eigentau-parmenidean}
\ket{\tau+t,\A}=e^{-\frac{i}{\hbar}t\wh{H}_{\tau}}\ket{\tau,\A}.
\end{equation}

A state vector $\ket{\psi(0)}$ has the general form
\begin{equation}
\label{eq:time-translation-general-psi}
\ket{\psi(0)}=\int_{\mc{C}} \psi_0(\tau,\A)\ket{\tau,\A}\de\tau\de\A,
\end{equation}
where $\psi_0(\tau,\A)=\braket{\tau,\A}{\psi(0)}$ is the wavefunction and $\mc{C}$ is the configuration space, parametrized by $\tau$ and $\A$.
The state vector is evolved by $e^{-\frac{i}{\hbar}t\wh{H}_{\tau}}$ into
\begin{equation}
\label{eq:time-translation-psi}
\begin{aligned}
\ket{\psi(t)}
&=e^{-\frac{i}{\hbar}t\wh{H}_{\tau}}\ket{\psi(0)} \\
&=\int_{\R^{n+1}} \psi_0(\tau,\A)\ket{\tau+t,\A}\de\tau\de\A \\
&=\int_{\R^{n+1}} \psi_0(\tau-t,\A)\ket{\tau,\A}\de\tau\de\A,
\end{aligned}
\end{equation}
where $n$ is the number of parameters $\A=\(a_1,a_2,\ldots\)$, or, equivalently, of the operators $\wh{a}_1,\wh{a}_2,\ldots$ forming together with $\wh{\tau}$ a complete set of commuting observables on $\hilbert$.
The number $n$ is the degeneracy (or multiplicity) of the operator $\wh{\tau}$.

Then, 
\begin{equation}
\label{eq:time-translation}
\psi_0(\tau-t,\A)=e^{-\frac{i}{\hbar}t\wh{H}_{\tau}}\psi_0(\tau,\A).
\end{equation}

Therefore, a wavefunction $\psi_0(\tau,\A)$ is evolved by $e^{-\frac{i}{\hbar}t\wh{H}_{\tau}}$ into $\psi_t(\tau,\A)=\psi_0(\tau-t,\A)$.
For example, for $t>0$, $\psi_0(0,\A)$ is translated forward along $\tau$ (see Figure \ref{fig:translation-tau}).

\begin{figure}[!ht]
\begin{flushright}
\begin{tikzpicture}[>={Stealth[length=6pt]},declare function={g(\x)=2*exp(-\x*\x/3);
    xmax=4;xmin=-4;taumax=8;x0=1.5;ymax=2.75;}]
 \node[single arrow,left color=blue!33!white,right color=red!33!white, shift={(1.3 cm,2 cm)}]{\color{black}{$\hspace{0.55 cm}e^{-\frac{i}{\hbar}t\wh{H}_{\tau}}\hspace{0.55 cm}$}};
 \draw[black] (-4.5,0) edge[->] (taumax,0) (0,0) edge[->] (0,ymax);
 \draw[dashed, black] (-3,0) -- (-3,2.25);
 \draw[dashed, black] (3,0) -- (3,2.25);
 \draw[dashed, black] (-3.75,0.1) -- (0.75,0.1);
 \draw[dashed, black] (-0.75,2) -- (0.75,2);
 \draw[thick, blue] plot[domain=xmin:xmax,samples=51,smooth] (\x,{g(\x)}); 
 \path 
  (-3,0) node[below]{$\tau=-t$}
  (0,0) node[below]{$\tau=0$}
  (3,0) node[below]{$\tau=t$}
  (taumax,0) node[below]{$\tau$}
  (0,2.25) node[left,blue]{$\psi_0(0,\A)$}  
  (-3,0.35) node[left,blue]{$\psi_0(-t,\A)$}  
  (0,0.35) node[left,red]{$\psi_t(0,\A)$}
	(0,ymax) node[right]{$\psi$};
 \draw[thick, red, shift={(3 cm, 0 cm)}] plot[domain=xmin:xmax,samples=51,smooth] (\x,{g(\x)}); 
 \draw[blue,fill=blue] (-3,0.1) circle [radius=0.05];
 \draw[blue,fill=blue] (0,2) circle [radius=0.05];
 \draw[red,fill=red] (0,0.1) circle [radius=0.05];
 \draw[red,fill=red] (3,2) circle [radius=0.05];
\end{tikzpicture}
\caption{The operator $e^{-\frac{i}{\hbar}t\wh{H}_{\tau}}$ translates $\psi_0$ forward with $t$ along the parameter $\tau$.}
\label{fig:translation-tau}
\end{flushright}
\end{figure}

Since in this representation there is no explicit change under time evolution (except for the value of $\tau$), we can call it \emph{Parmenidean representation}.
It is similar to a physical system that does not change in time, even if it undergoes inertial motion.
However, in Section \sref{s:clock}, Remark \ref{rem:PW}, we will recall that Canonical Quantum Gravity implies that there is a ``truly Parmenidean'' representation that appears to be timeless.
\end{example}

This may give the impression that translational quantum systems are unable to describe any interesting physics.
The next example shows that this is not the case, and is the key to the examples from this article.

\begin{example}[Heraclitean representation]
\label{ex:heraclitean}
Suppose now that the operators $\wh{a}_1,\wh{a}_2,\ldots$, commuting with $\wh{\tau}$, are chosen so that not all of them commute with $\wh{H}_{\tau}$.
Then the unitary evolution of a state represented by a state vector $\ket{\psi(0)}$, which is initially an eigenstate $\ket{0}_{C}\ket{\varphi(0)}_{R}$ of $\wh{\tau}$, where  $\ket{0}_{C}\in\hilbert_{C}$ and $\ket{\varphi(0)}_{R}\in\hilbert_{R}$, has the general form
\begin{equation}
\label{eq:time-translation-eigentau-heraclitean}
\ket{t}_{C}\ket{\varphi(t)}_{R}=e^{-\frac{i}{\hbar}t\wh{H}_{\tau}}\ket{0}_{C}\ket{\varphi(0)}_{R},
\end{equation}
so that at any time the system is in an eigenstate of $\wh{\tau}$, but $\ket{\varphi(t)}_{R}$ changes in time.
The subsystem $\(\hilbert_{C},\wh{H}_{\tau_{C}}\)$ plays the role of a \emph{clock}. It will be discussed in Section \sref{s:clock}, where we will see that it is possible for the rest of the system to have any Hamiltonian.

The possibility that the operators $\wh{a}_1,\wh{a}_2,\ldots$ don't commute with the Hamiltonian, together with the possibility to interpret them physically in many different ways, allows translational quantum systems to be very diverse, proving their versatility. Sections \sref{s:measurements}--\sref{s:dyn_sys} illustrate this versatility with concrete examples.
\end{example}

\begin{remark}[Terminology. Time translation]
\label{rem:terminology-translation-time}
The unitary evolution operators $e^{-\frac{i}{\hbar}t\wh{H}}$ of any closed quantum system are sometimes called \emph{time translations} (\eg \cite{AulettaFortunatoParisi2009QuantumMechanics} p. 101).
But, considering that any unitary operator can be put in an exponential form $e^{-\frac{i}{\hbar}t\wh{H}}$ for some Hermitian operator $\wh{H}$, does this mean that all unitary operators are translations?
The answer is no, the notion of ``time translation operator'' is different from the notion of ``translation operator''.
Translation operators like $e^{-\frac{i}{\hbar}\Delta x_j\wh{p}_j}$ and $e^{-\frac{i}{\hbar}\Delta\tau\wh{H}_{\tau}}$ produce translations with respect to the parameters $x_j$, respectively $\tau$, which are degrees of freedom of the system. But the time parameter is not a degree of freedom of the system. For example, if $\ket{\psi(t_1)}=\ket{\psi(t_2)}$, the two states are identical with respect to all degrees of freedom of the system even if $t_1\neq t_2$, while for a {\TQS} different values of $\tau$ correspond to orthogonal states.

However, it is possible to interpret the unitary evolution operators of a generic closed quantum system as translations, but only in a limited sense.
To do this, we can interpret the time-dependent wavefunction $\psi(\q,t)$, where $\q\in\mc{C}$ is a point in the configuration space $\mc{C}$ of the system, as a wavefunction on a \emph{time-extended} configuration space $\mc{C}\times\R$.
Here $\mc{C}\times\R$ denotes the Cartesian product between the original configuration space of the system and the time axis $\R$.
This makes time a degree of freedom of the system.
In this case,
\begin{equation}
\label{eq:square-integral-time-extended-config-space}
\int_{\mc{C}\times\R}\abs{\psi(\q,t)}^2\de\q d t=\infty,
\end{equation}
so the time-extended wavefunctions do not form a Hilbert space.
But for any fixed $t\in\R$, $\int_{\mc{C}}\abs{\psi(\q,t)}^2\de\q=1$ as usual, so the wavefunctions $\psi(\q,t)$ defined on $\mc{C}\times\{t\}$ form a Hilbert space $\hilbert_t$.

The unitary evolution operators $e^{-\frac{i}{\hbar}t\wh{H}}$ act like translations on the time-extended configuration space $\mc{C}\times\R$, but only for those time-extended wavefunctions that are solutions of the {\schrod} equation for the Hamiltonian $\wh{H}$.
If $\psi(\q,t)=\braket{\q}{\psi(t)}$ is a general wavefunction, or if it is a solution of the {\schrod} equation for a different Hamiltonian $\wh{H}'$, obviously $\wh{H}$ does not generate the time translation of $\psi(\q,t)$, because in general $e^{-\frac{i}{\hbar} t\wh{H}}\ket{\psi(t)}\neq e^{-\frac{i}{\hbar} t\wh{H}'}\ket{\psi(t)}$.

This is true in particular for the time-extended configuration space of a {\TQS}, obtained by including the time among the degrees of freedom of the system: the unitary evolution operators $e^{-\frac{i}{\hbar}t\wh{H}_{\tau}}$ act like time translations only on the time-extended wavefunctions that are solutions of the {\schrod} equation \eqref{eq:schrod-TQS}.

On the other hand, on the original configuration space, the translation operators $e^{-\frac{i}{\hbar}t\wh{H}_{\tau}}$ translate any function $\psi(\tau,\A)$ with respect to $\tau$, as in equation \eqref{eq:time-translation}.
This applies even to those functions that don't satisfy the {\schrod} equation with the Hamiltonian \eqref{eq:time_translation_Hamiltonian}, so the operators $e^{-\frac{i}{\hbar}t\wh{H}_{\tau}}$ satisfy the standard definition of translation operators.
\end{remark}

\begin{remark}[Relation between $\tau$ and $t$]
\label{rem:tau-and-t}
It may seem that, by working on the time-extended configuration space as in Remark \ref{rem:terminology-translation-time}, we can introduce a time observable for any Hamiltonian.
Even if we find a way to avoid the technical difficulties arising from the fact that the time-extended wavefunctions do not form a Hilbert space, as seen in equation \eqref{eq:square-integral-time-extended-config-space}, such a time operator would not satisfy the canonical commutation relation with the Hamiltonian.

On the other hand, in the case of a {\TQS}, $\tau$ is a degree of freedom of the system, and since the allowed solutions are eigenstates of $\wh{\tau}$, $\wh{\tau}$ can play the role of a time operator, and it satisfies the canonical commutation relation \eqref{eq:CCR-tau}, just like we would expect for time and energy.

From Definition \ref{def:TQS}, the allowed states of a {\TQS} are eigenstates of the operator $\wh{\tau}$, and they remain eigenstatesof $\wh{\tau}$ at any time $t$. Suppose that at time $t_0$ the state of the system is in an eigenstate of $\wh{\tau}$, corresponding to the eigenvalue $\tau_0$. Then, at any other time $t_0+\Delta t$, the state of the {\TQS} is $\tau_0+\Delta t$, {\cf} equations \eqref{eq:time-translation-eigentau-parmenidean} and \eqref{eq:time-translation-eigentau-heraclitean}.
Therefore, $\tau$ and $t$ are correlated so that as $t$ increases with $\Delta t$, $\tau$ increases with the same amount.
This justifies the interpretation of $\tau$ as time.
\end{remark}

\begin{remark}[Probabilities for a {\TQS}]
\label{rem:interpretation-probability}
A {\TQS} is a quantum system like any other quantum system, with some particularities.
One of these particularities is that the state vector is an eigenstate of the operator $\wh{\tau}$, and it remains an eigenstate at all times.
For example, {\cf} equation \eqref{eq:time-translation-eigentau-parmenidean}, at any time $t$ the state vector is $\ket{t,\A}=e^{-\frac{i}{\hbar}t\wh{H}_{\tau}}\ket{0,\A}$.
While $\ket{\tau,\A}$ is not normalizable because $\braket{\tau_1,\A}{\tau_2,\A}$ is a Dirac distribution, it satisfies the ``normalization condition'' $\braket{\tau_1,\A}{\tau_2,\A}=\delta\(\tau_1-\tau_2\)$.
Unitary evolution ensures that this remains true at all times.

Another particularity is that the Hamiltonian \eqref{eq:time_translation_Hamiltonian} admits a canonical conjugate $\wh{\tau}$.
If we do not know the value $\tau$, we represent this limited knowledge by a state vector
\begin{equation}
\label{eq:square-integral-time-extended-config-space-tau}
\ket{\psi}=\int_{\mc{C}} \psi(\tau,\A)\ket{\tau,\A}\de\tau\de\A,
\end{equation}
as in equation \eqref{eq:time-translation-general-psi}, where $\braket{\psi}{\psi}=1$.
Then, $\braket{\tau}{\psi}$ is the probability distribution for the value of $\tau$, just like in the case of any observable, for example the position observable.
And since {\cf} Remark \ref{rem:tau-and-t} the value $\tau$ correlates with time,  $\braket{\tau}{\psi}$ can be interpreted as the probability distribution for time.

Note that we do not consider the state of the {\TQS} to be represented by the totality of states at all times, $\int_{\mc{C}}\ket{\tau,\A}\de\tau\de\A$ as for the time-extended wavefunction \eqref{eq:square-integral-time-extended-config-space}.
There is no reason to do this, and no probabilistic interpretation for this.
Therefore, we do not have to deal with the fact that $\int_{\R}\braket{\tau,\A}{\tau,\A}d \tau=\infty$. This corresponds to complete lack of knowledge about $\tau$, similar to how the uncertainty of position is maximal for the eigenstates of the momentum observable.
Also, $\tau$ cannot stand for the absolute time, but it can be interpreted as relative time, just like $\x$ represents relative position.
\end{remark}

\begin{remark}[Sign convention]
\label{rem:sign-convention}
From Remark \ref{rem:terminology-translation-time} we learn that, even though any Hamiltonian operator generates time translations for the solutions of the corresponding {\schrod} equation, in order to generate translations for all wavefunctions on the configuration space, the Hamiltonian has to be $-i\hbar\frac{\partial\ }{\partial\tau}$ (or a multiple $-ir\hbar\frac{\partial\ }{\partial\tau}$ of it, where $r\neq 0$ is a real constant).
Since $r\neq 0$, it can be absorbed in $\tau$, by substituting $\tau$ with $\tau/r$. But even so, there are two distinct interesting choices for the Hamiltonian, $-i\hbar\frac{\partial\ }{\partial\tau}$ and $i\hbar\frac{\partial\ }{\partial\tau}$, and both are used in the literature. 
The choice $\wh{H}_{\tau}=i\hbar\frac{\partial\ }{\partial\tau}$ satisfies the canonical commutation relation $[\wh{\tau},\wh{H}_{\tau}]=-i\hbar\wh{I}$.
The choice $\wh{H}_{\tau}=-i\hbar\frac{\partial\ }{\partial\tau}$ from equation \eqref{eq:time_translation_Hamiltonian} satisfies the canonical commutation relation $[\wh{\tau},\wh{H}_{\tau}]=i\hbar\wh{I}$.
Which choice is better may depend of applications and personal preferences. For example, in \cite{SrinivasVijayalakshmi1981TimeOfOccurrenceInQuantumMechanics}, p. 183, it is argued for $[\wh{\tau},\wh{H}_{\tau}]=-i\hbar\wh{I}$, because this ``is the requirement of time-translation invariance of the theory'', while Jammer argues that $[\wh{\tau},\wh{H}_{\tau}]=i\hbar\wh{I}$ is the right choice, because ``the time-energy and position-momentum relations would have the same logical status'' and ``the relativity requirement of treating time and position coordinates as well as energy and momentum components on a common footing'' (\cite{Jammer1974ThePhilosophyOfQuantumMechanics} p. 150).

If we compare equations \eqref{eq:time-translation} and \eqref{eq:time-evol}, we see that the time evolution generated by the Hamiltonian $-i\hbar\frac{\partial\ }{\partial\tau}$ translates the wavefunction along $\tau$ in opposite direction to how it translates it along $t$. So, at first sight, it may seem better to choose $i\hbar\frac{\partial\ }{\partial\tau}$ as the Hamiltonian.
But there is a difference between time as a parameter that is not a degree of freedom of the system, and $\tau$, which is, and this is why the translation happens differently with respect to $t$ compared to $\tau$.
With the choice $-i\hbar\frac{\partial\ }{\partial\tau}$, the evolution operator $e^{-\frac{i}{\hbar}t\wh{H}_{\tau}}$ translates an eigenstate $\ket{\tau}_{C}\ket{\varphi(\tau)}_{R}$ to an eigenstate $\ket{\tau+t}_{C}\ket{\varphi(\tau+t)}_{R}$, so that the corresponding eigenvalue increases with $t$, as in equation \eqref{eq:time-translation-eigentau-heraclitean}.
This motivates equations \eqref{eq:time_translation_Hamiltonian} and \eqref{eq:CCR-tau} as the appropriate choice if we want to treat time as a degree of freedom of the system.
\end{remark}

\section{Quantum measurements}
\label{s:measurements}

In this Section I show that the Hamiltonian of a well known model of ideal measurements, named the \emph{standard model of measurements} in \cite{BuschGrabowskiLahti1995OperationalQuantumPhysics}  \S II.3.4, is a {\TQS}.

Let $M$ be the measuring device, $S$ the observed system, and $\hilbert_{M}$ and $\hilbert_{S}$ their Hilbert spaces.
Suppose that a property of system $S$ is measured, represented by the Hermitian operator $\wh{\mc{O}}$ on $\hilbert_{S}$ with discrete spectrum $\sigma(\wh{\mc{O}})$ and eigenstates $\ket{\lambda,a}_{S}$, $\lambda\in\sigma(\wh{\mc{O}})$, where $a$ is a degeneracy index.
We assume that $0$ is not an eigenvalue of $\wh{\mc{O}}$.

We assume that the pointer observable is a Hermitian operator $\wh{\mc{M}}$ on $\hilbert_{M}$, with nondegenerate and continuous spectrum $\sigma(\wh{\mc{M}})$, and eigenstates $\ket{\zeta}_{M}$ labeled by $\zeta\in\sigma(\wh{\mc{M}})$.
Since the pointer states have to represent the eigenstates of the operator $\wh{\mc{O}}$, we assume that an injective function $\zeta:\sigma(\wh{\mc{O}})\to\sigma(\wh{\mc{M}})$ associates the eigenvalue $\zeta=\zeta(\lambda)$ to each $\lambda\in\wh{\mc{O}}$, so that $\zeta(\lambda)\neq 0$ for all $\lambda\in\sigma(\wh{\mc{O}})$. We reserve $0\in\sigma(\wh{\mc{M}})$ to represent the ``ready'' pointer state.

Suppose that the measurement starts at $t=0$ and ends at $t=T>0$.
Let $\wh{U}$ be the unitary evolution operator of the total system.
For each $\lambda\in\sigma(\wh{\mc{O}})$, we assume that
\begin{equation}
\label{eq:calibration}
\wh{U}(T)\ket{\lambda,a}_{S}\ket{\tn{ready}}_{M} = \ket{\lambda,a}_{S}\ket{\zeta(\lambda)}_{M}.
\end{equation}

For any initial state of the observed system, the total system evolves into a superposition of states \eqref{eq:calibration}.
Then, the Projection Postulate (or another mechanism based on decoherence) is invoked to explain that only one term in the superposition remains.
If the pointer's final state is $\ket{\zeta(\lambda)}_{M}$, the result of the measurement is taken to be the eigenvalue $\lambda$ of $\mc{O}$.

In the standard model of quantum measurements (see \eg \cite{Mittelstaedt2004InterpretationOfQMAndMeasurementProcess} \S 2.2(b) and \cite{BuschGrabowskiLahti1995OperationalQuantumPhysics} \S II.3.4), condition \eqref{eq:calibration} is ensured by the following Hamiltonian, which ignores the free Hamiltonians of the two systems or assumes them to vanish,
\begin{equation}
\label{eq:measurement_hamiltonian}
\wh{H}=\wh{H}_{\tn{int}} = g \wh{\mc{O}}\otimes\wh{p}_{\mc{M}},
\end{equation}
$\wh{p}_{\mc{M}}$ is the canonical conjugate of the pointer operator $\wh{\mc{M}}$ with $\sigma(\wh{\mc{M}})=\R$.
We assume that the coupling $g$ is constant in the interval $\left[0,T\right]$ and negligible outside, for example due to an interaction potential.
Then, for any $\lambda$ and $a$,
\begin{equation}
\label{eq:interaction_unitary_evolution_operator_action}
\wh{U}(T)\ket{\lambda,a}_{S}\ket{\tn{ready}}_{M} = \ket{\lambda,a}_{S} e^{-\frac{i}{\hbar}g T \lambda \wh{p}_{\mc{M}}}\ket{\tn{ready}}_{M}.
\end{equation}

Since $[\wh{\mc{M}},\wh{p}_{\mc{M}}]=i\hbar\wh{I}$, we get for any $\lambda$
\begin{equation}
\label{eq:interaction_unitary_evolution_operator_transform}
e^{ig T \hbar^{-1} \lambda \wh{p}_{\mc{M}}}\wh{\mc{M}}e^{-ig T \hbar^{-1} \lambda \wh{p}_{\mc{M}}} = \wh{\mc{M}} + g T \hbar^{-1} \lambda \wh{1}_{M}.
\end{equation}

From \eqref{eq:calibration} and \eqref{eq:interaction_unitary_evolution_operator_action}, the pointer eigenstates are
\begin{equation}
\label{eq:pointer_eigenstates}
\ket{\zeta(\lambda)}_{M} = e^{-\frac{i}{\hbar}g T \lambda \wh{p}_{\mc{M}}}\ket{\tn{ready}}_{M}.
\end{equation}

From \eqref{eq:interaction_unitary_evolution_operator_transform} and $\zeta_\tn{ready}=0$ the pointer eigenvalues are
\begin{equation}
\label{eq:pointer_eigenvalues}
\zeta(\lambda)= g T\lambda.
\end{equation}

Now I prove the main result of this Section.

\begin{theorem}
\label{thm:measurement}
There is a basis of the total Hilbert space $\hilbert=\hilbert_{M}\otimes\hilbert_{S}$ in which the Hamiltonian from eq. \eqref{eq:measurement_hamiltonian} has the form $\wh{H} = -i\hbar\frac{\partial\ }{\partial\tau}$ on a subspace containing the possible measurement results.
\end{theorem}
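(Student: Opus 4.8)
The plan is to exhibit the unitary equivalence explicitly. Since $\sigma(\wh{\mc{M}})=\R$ and $\wh{p}_{\mc{M}}$ is the canonical conjugate of the pointer operator, I first pass to the pointer position representation, identifying $\hilbert_{M}\cong L^2(\R,\dn\mu)$ with $\wh{\mc{M}}$ acting as multiplication by $\mu$ and $\wh{p}_{\mc{M}}=-i\hbar\,\partial/\partial\mu$. On the other tensor factor I use the spectral decomposition of $\wh{\mc{O}}$, writing $\hilbert_{S}\cong\bigoplus_{\lambda\in\sigma(\wh{\mc{O}})}\hilbert_{S,\lambda}$ over its eigenspaces, on which $\wh{\mc{O}}$ acts as multiplication by $\lambda$ and the degeneracy index $a$ merely labels an orthonormal basis of each $\hilbert_{S,\lambda}$, carried along passively. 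The standing hypothesis $0\notin\sigma(\wh{\mc{O}})$ is exactly what makes the construction work: $g\lambda\neq 0$ on every block.

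Next I would treat a single block $L^2(\R,\dn\mu)\otimes\hilbert_{S,\lambda}$, on which $\wh{H}=g\,\wh{\mc{O}}\otimes\wh{p}_{\mc{M}}$ from \eqref{eq:measurement_hamiltonian} acts as $-i\hbar\,g\lambda\,\partial/\partial\mu$. I rescale the pointer coordinate, $\mu\mapsto\tau:=\mu/(g\lambda)$, via the unitary $V_{\lambda}\colon L^2(\R,\dn\mu)\to L^2(\R,\dn\tau)$ given by $(V_{\lambda}\psi)(\tau)=\sqrt{|g\lambda|}\,\psi(g\lambda\,\tau)$; the weight $\sqrt{|g\lambda|}$ makes $V_{\lambda}$ unitary for either sign of $g\lambda$, and a one-line chain rule gives $V_{\lambda}\,(g\lambda\,\wh{p}_{\mc{M}})\,V_{\lambda}^{-1}=-i\hbar\,\partial/\partial\tau$. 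Hence on each block $\wh{H}$ is already the translation generator in the variable $\tau$.

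Then I would assemble the blocks: $V:=\bigoplus_{\lambda}\big(V_{\lambda}\otimes\wh{1}_{\hilbert_{S,\lambda}}\big)$ is unitary from $\hilbert$ onto $L^2(\R,\dn\tau)\otimes\mc{K}$ with $\mc{K}:=\bigoplus_{\lambda}\hilbert_{S,\lambda}$, and by the block computation it intertwines $\wh{H}$ with $\big(-i\hbar\,\partial/\partial\tau\big)\otimes\wh{1}_{\mc{K}}$. Pulling the generalized eigenbasis of the multiplication operator $\wh{\tau}$ back through $V$ gives a basis of $\hilbert$ in which $\wh{H}=-i\hbar\,\partial/\partial\tau$ and $\wh{\tau}$ is its canonical conjugate; this is precisely the assertion, and it shows that the standard model of ideal measurements is a {\TQS} on the interval where $g$ is constant.

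The only real subtlety I anticipate is bookkeeping when $\sigma(\wh{\mc{O}})$ is not purely discrete: the family $\lambda\mapsto V_{\lambda}$ must then be handled as a measurable field of unitaries so that $V$ is a genuine unitary between direct integrals rather than a formal direct sum, and any multiplicity hidden in $\hilbert_{M}$ (from a reducible $\wh{\mc{M}}$--$\wh{p}_{\mc{M}}$ pair) is likewise absorbed into the passive factor $\mc{K}$. The sign issue --- $g\lambda<0$ reverses the orientation of the $\mu$-axis --- is harmless, since the absolute value in $V_{\lambda}$ absorbs it and the generator still comes out as $-i\hbar\,\partial/\partial\tau$ with the correct sign; and the invertibility of the rescaling needs only $\lambda\neq 0$, which is guaranteed by $0\neq\lambda\in\sigma(\wh{\mc{O}})$.
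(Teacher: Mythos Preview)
Your argument is correct and follows essentially the same route as the paper: pass to the pointer-position representation, decompose $\hilbert_S$ into $\wh{\mc{O}}$-eigenspaces, and on each $\lambda$-block rescale the pointer variable by $g\lambda$ so that $g\lambda\,\wh{p}_{\mc{M}}$ becomes $-i\hbar\,\partial/\partial\tau$. Your version is in fact slightly more careful than the paper's, which performs the same change of variable $\zeta=g\lambda\tau$ blockwise but does not write the Jacobian factor $\sqrt{|g\lambda|}$ needed for unitarity, nor discuss the continuous-spectrum or sign subtleties you flag.
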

\begin{proof}
The representation of the pointer observable $\wh{\mc{M}}$ is $\bra{\zeta}\wh{\mc{M}}\ket{\psi}_{M} = \zeta\braket{\zeta}{\psi}_{M}$, and that of $\wh{p}_{\mc{M}}$ is $\wh{p}_{\mc{M}}=-i\hbar\frac{\partial\ }{\partial\zeta}$.
The Hamiltonian \eqref{eq:measurement_hamiltonian} becomes
\begin{equation}
\label{eq:interaction_hamiltonian_xp}
\wh{H}=-g \sum_{\lambda,a}\lambda\ket{\lambda,a}\bra{\lambda,a}\otimes i\hbar\frac{\partial\ }{\partial\zeta}.
\end{equation}

On the total Hilbert space we define an operator $\wh{\tau}$ whose restriction on each subspace $\ket{\lambda,a}_{S}\otimes\hilbert_{M}$ is
\begin{equation}
\label{eq:operator_x}
 \wh{\tau}|_{\ket{\lambda,a}_{S}\otimes\hilbert_{M}}=\frac{1}{g \lambda}\wh{\mc{M}}.
\end{equation}

On each subspace $\{\ket{\lambda,a}_{S}\}\otimes\hilbert_{M}$, where $\{\ket{\lambda,a}_{S}\}$ is the set containing only the state vector $\ket{\lambda,a}_{S}$, its eigenvalue $\tau$ satisfies
\begin{equation}
\label{eq:operator_x_eigenvalues}
\zeta=g \lambda \tau.
\end{equation}

Then, on each subspace $\{\ket{\lambda,a}_{S}\}\otimes\hilbert_{M}$,
\begin{equation}
\label{eq:operator_partial_xi}
-g\lambda i\hbar\frac{\partial}{\partial \zeta}
=-g\lambda i\hbar\frac{\partial}{\partial (g \lambda \tau)}
=-i\hbar\frac{\partial\ }{\partial\tau}.
\end{equation}

Therefore, $\wh{H} = -i\hbar\frac{\partial\ }{\partial\tau}$ on the subspace spanned by the state vectors of the form $\wh{U}(t)\ket{\lambda,a}_{S}\ket{\tn{ready}}_{M}$, for all $\lambda\in\sigma(\wh{\mc{O}})$, all $a$, and all $t\in[0,T]$.
If $g$ is constant for all times, this extends to the entire Hilbert space $\hilbert_{S}\otimes\hilbert_{M}$.
\end{proof}

\begin{remark}
\label{rem:thm:measurement:factorization}
The unitary transformation from the proof of Theorem \ref{thm:measurement} does not preserve the factorization into subsystems $\hilbert_{M}\otimes\hilbert_{S}$.
It is not a change of basis of the factor Hilbert spaces $\hilbert_{M}$ and $\hilbert_{S}$ separately, but of the total Hilbert space $\hilbert_{M}\otimes\hilbert_{S}$, and this is consistent with the statement of Theorem \ref{thm:measurement}.
\end{remark}

\begin{remark}
\label{rem:thm:measurement:coupling}
In practice, the coupling $g$ from eq. \eqref{eq:measurement_hamiltonian} is negligible outside the time interval $\left[0,T\right]$ when the measuring device and the observed system interact in a non-negligible way. Theorem \ref{thm:measurement} applies only to this interval, and only if the total system $S+M$ is closed. Whether the Hamiltonian can be put in the form \eqref{eq:time_translation_Hamiltonian} for all times, including before and after the measurement, depends on the mechanism of turning the interaction on and off. In most practical cases this is achieved by having $g$ depend on the distance between $S$ and $M$. This can happen if the position of the observed system changes in time, so its free Hamiltonian is not zero.
Alternatively, $g$ can depend on an external system responsible for turning the coupling on and off, which would make the total Hamiltonian time dependent. 
Theorem \ref{thm:measurement} does not cover these cases.
\end{remark}

\begin{remark}
\label{rem:thm:measurement:ideal}
The standard model of quantum measurements described in \cite{BuschGrabowskiLahti1995OperationalQuantumPhysics} \S II.3.4 applies both to ideal (projective) and generalized (POVM) measurements.
There are many concrete realizations of this model, including for the Stern-Gerlach experiment and various experiments with photons (\cite{BuschGrabowskiLahti1995OperationalQuantumPhysics} \S VII). 
But in general these measurements are unsharp in practice, hence the necessity of POVM, while Theorem \ref{thm:measurement} applies only to ideal projective measurements.
\end{remark}

\section{Systems containing ideal clocks}
\label{s:clock}

In this Section I show that any quantum system containing an ideal clock, or, more generally, any closed translational subsystem, has the Hamiltonian \eqref{eq:time_translation_Hamiltonian}.

Consider a quantum system composed of two closed subsystems: a generic system $R$ with state vectors $\ket{\psi(t)}\in\hilbert_{R}$ and any Hamiltonian $\wh{H}_{R}$, and a {\TQS} $C$ with state vectors $\ket{\eta(t)}\in\hilbert_{C}$ and Hamiltonian $\wh{H}_{C}=-i\hbar\frac{\partial\ }{\partial\tau_{C}}$.
Let $\wh{\tau}_{C}$ be the canonical conjugate of $-i\hbar\frac{\partial\ }{\partial\tau_{C}}$.
At $t=0$, and therefore at any time $t$, $\ket{\eta(t)}$ is assumed to be an eigenstate of $\wh{\tau}_{C}$.
The total Hamiltonian is
\begin{equation}
\label{eq:CR}
\wh{H}_{C+R}=\wh{H}_{C}\otimes \wh{I}_{R}+\wh{I}_{C}\otimes\wh{H}_{R}.
\end{equation}

\begin{theorem}
\label{thm:clock}
There is a basis of the total Hilbert space $\hilbert_{C+R}:=\hilbert_{C}\otimes\hilbert_{R}$, consisting of eigenstates of $\wh{\tau}:=\wh{\tau}_{C}\otimes \wh{I}_{R}$, in which
$\wh{H}_{C+R} = -i\hbar\frac{\partial\ }{\partial\tau}$.
The total Hamiltonian $\wh{H}_{C+R}$ is the canonical conjugate of the operator $\wh{\tau}$.
\end{theorem}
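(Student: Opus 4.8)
The plan is to produce an explicit unitary change of basis of $\hilbert_{C+R}$ — an ``interaction-picture'' dressing — that commutes with $\wh{\tau}=\wh{\tau}_C\otimes\wh{I}_R$ and absorbs the term $\wh{I}_C\otimes\wh{H}_R$ into the translation generated by $\wh{H}_C$. First I would pass to a {\schrod}-type representation of the canonically conjugate pair $(\wh{\tau}_C,\wh{H}_C)$: since $\wh{\tau}_C$ is conjugate to $-i\hbar\frac{\partial\ }{\partial\tau_C}$ it has spectrum $\R$, so $\hilbert_C$ is realized as $L^2(\R)$ with $\wh{\tau}_C$ acting by multiplication by $\tau_C$ and $\wh{H}_C=-i\hbar\frac{\partial\ }{\partial\tau_C}$ (any multiplicity can be absorbed into $\hilbert_R$). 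Then $\hilbert_{C+R}\cong L^2(\R,\hilbert_R)$, the operator $\wh{\tau}$ acts as multiplication by $\tau_C$, and by \eqref{eq:CR} $\wh{H}_{C+R}=-i\hbar\frac{\partial\ }{\partial\tau_C}\otimes\wh{I}_R+\wh{I}_C\otimes\wh{H}_R$.

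Next I would introduce the operator $\wh{W}$ on $L^2(\R,\hilbert_R)$ defined fiberwise by $(\wh{W}\Psi)(\tau_C)=e^{\frac{i}{\hbar}\tau_C\wh{H}_R}\,\Psi(\tau_C)$. Each $e^{\frac{i}{\hbar}\tau_C\wh{H}_R}$ is unitary on $\hilbert_R$ because $\wh{H}_R$ is self-adjoint, so $\wh{W}$ is unitary, and since it acts within each fiber $\{\tau_C\}\times\hilbert_R$ it commutes with multiplication by $\tau_C$, i.e. $\wh{W}\wh{\tau}=\wh{\tau}\wh{W}$. A short computation with the Leibniz rule then gives $\wh{W}\wh{H}_{C+R}\wh{W}^{-1}=-i\hbar\frac{\partial\ }{\partial\tau_C}$: differentiating the conjugating phase produces a term $-\wh{I}_C\otimes\wh{H}_R$ that exactly cancels the free Hamiltonian of $R$. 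Equivalently, writing $\ket{\tau_C}$ for the (generalized) eigenstates of $\wh{\tau}_C$, $\{\ket{e_n}\}$ for an orthonormal basis of $\hilbert_R$, and $\ket{\tau_C;n}:=\wh{W}^{-1}\(\ket{\tau_C}\otimes\ket{e_n}\)=\ket{\tau_C}\otimes e^{-\frac{i}{\hbar}\tau_C\wh{H}_R}\ket{e_n}$, one checks directly that the evolution operator $\wh{U}_{C+R}(t)=e^{-\frac{i}{\hbar}t\wh{H}_C}\otimes e^{-\frac{i}{\hbar}t\wh{H}_R}$ satisfies $\wh{U}_{C+R}(t)\ket{\tau_C;n}=\ket{\tau_C+t;n}$, which is precisely the statement that $\wh{H}_{C+R}=-i\hbar\frac{\partial\ }{\partial\tau}$ in the basis $\{\ket{\tau_C;n}\}$.

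To conclude, I would record the two assertions. The vectors $\ket{\tau_C;n}$ are eigenstates of $\wh{\tau}$ with eigenvalue $\tau_C$, since $\wh{W}$ commutes with $\wh{\tau}$ and $\ket{\tau_C}\otimes\ket{e_n}$ is such an eigenstate; this also makes clear (as in Remark~\ref{rem:thm:measurement:factorization}) that $\wh{W}$ is a genuine change of basis of the whole Hilbert space which in general does not respect the factorization $\hilbert_C\otimes\hilbert_R$, because $e^{\frac{i}{\hbar}\tau_C\wh{H}_R}$ entangles $C$ with $R$. For the canonical-conjugate claim, observe that $\wh{I}_C\otimes\wh{H}_R$ commutes with $\wh{\tau}_C\otimes\wh{I}_R$, so already $[\wh{\tau},\wh{H}_{C+R}]=[\wh{\tau}_C,\wh{H}_C]\otimes\wh{I}_R=i\hbar\,\wh{I}$; after the transformation $\wh{\tau}$ and $-i\hbar\frac{\partial\ }{\partial\tau}$ form the standard Weyl pair, so $\wh{H}_{C+R}$ is the canonical conjugate of $\wh{\tau}$. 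I would also note that the hypothesis that $\ket{\eta(0)}$, hence every $\ket{\eta(t)}$, is an eigenstate of $\wh{\tau}_C$ becomes, in the $\wh{W}$-frame, the statement that the state of $C+R$ is at all times an eigenstate of $\wh{\tau}$, so that $C+R$ is a translational quantum system in the full sense of Definition~\TQS.

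The one point requiring care is the usual unbounded-operator bookkeeping: I would verify that $\wh{W}$ maps a common core of $\wh{H}_{C+R}$ into one, so that $\wh{W}\wh{H}_{C+R}\wh{W}^{-1}=-i\hbar\frac{\partial\ }{\partial\tau_C}$ holds as an identity of self-adjoint operators rather than merely formally, and that $-i\hbar\frac{\partial\ }{\partial\tau}$ is essentially self-adjoint on $L^2(\R,\hilbert_R)$. Since $\tau_C\mapsto e^{\frac{i}{\hbar}\tau_C\wh{H}_R}$ is a strongly continuous one-parameter group in $\tau_C$, these are routine, and I would dispatch them briefly while keeping the focus on the explicit dressing transformation.
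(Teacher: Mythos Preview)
Your argument is correct, and in fact the basis you end up with is literally the one the paper constructs: your $\ket{\tau_C;n}=\ket{\tau_C}\otimes e^{-\frac{i}{\hbar}\tau_C\wh{H}_R}\ket{e_n}$ is exactly the paper's orbit vector $\wh{U}_{C+R}(\tau_C)\ket{\eta(0)}\ket{e_n}$ once one identifies $\ket{\eta(\tau_C)}$ with $\ket{\tau_C}$. The difference is in the packaging. The paper takes an orthonormal basis $(\ket{b})$ of $\hilbert_R$, shows that the orbits $\{\wh{U}_{C+R}(\tau)\ket{\eta(0)}\ket{b}\}$ consist of mutually orthogonal vectors (using $\braket{\eta(\tau_1)}{\eta(\tau_2)}=0$ for $\tau_1\neq\tau_2$), that the orbits for different $b$ are mutually orthogonal and span $\hilbert_{C+R}$, and then invokes Stone's theorem on each orbit subspace to identify the generator as $-i\hbar\partial_\tau$. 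You instead write down the interaction-picture dressing $\wh{W}$ globally and compute $\wh{W}\wh{H}_{C+R}\wh{W}^{-1}=-i\hbar\partial_{\tau_C}$ by the Leibniz rule, which bypasses Stone's theorem entirely. Your verification $\wh{U}_{C+R}(t)\ket{\tau_C;n}=\ket{\tau_C+t;n}$ is precisely the paper's orbit statement, so you have essentially given both arguments. Your presentation has the advantage of making the commutation $[\wh{W},\wh{\tau}]=0$ explicit, so the ``eigenstates of $\wh{\tau}$'' claim is immediate rather than checked afterwards; the paper's orbit picture is perhaps more geometric and avoids the (admittedly mild) functional-analytic housekeeping you flag at the end.
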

\begin{proof}
Let $\wh{U}_{C+R}(t) :=e^{-\frac{i}{\hbar}t\wh{H}_{C+R}}$.
Let $\(\ket{b}\)_{b\in\mc{B}}$ be an orthonormal basis of $\hilbert_{R}$.

Then, for any $\tau_1\neq \tau_2\in\R$, any eigenstate $\ket{\eta(\tau_1)}\in\hilbert_{C}$ of $\wh{\tau}_{C}$, and any $b\in\mc{B}$,
\begin{equation}
\label{eq:unitary_evolution_CR}
\wh{U}_{C+R}(\tau_2-\tau_1)\ket{\eta(\tau_1)}\ket{b}
=\ket{\eta(\tau_2)}e^{-\frac{i}{\hbar}t\wh{H}_{R}}(\tau_2-\tau_1)\ket{b}.
\end{equation}

Since $\braket{\eta(\tau_2)}{\eta(\tau_1)}=0$, from eq. \eqref{eq:unitary_evolution_CR} it follows that $\wh{U}_{C+R}(\tau_1)\ket{\eta(0)}\ket{b}$ and $\wh{U}_{C+R}(\tau_2)\ket{\eta(0)}\ket{b}$ are orthogonal for any $\tau_1\neq \tau_2$.
Therefore, the system of vectors $\{\wh{U}_{C+R}(\tau)\ket{\eta(0)}\ket{b}|\tau\in\R\}$ is orthogonal, and the one-parameter unitary group $\(\wh{U}_{C+R}(\tau)\)_{\tau\in\R}$ acts on it like a translation group.

We recall the following theorem, due to Stone \cite{Stone1932OnOneParameterUnitaryGroupsInHilbertSpace} (\cite{Hall2013QuantumTheoryForMathematicians}, Theorem 10.15 p. 210):

\begin{proof}[Stone's Theorem]
If $\(\wh{U}(t)\)_{t\in\R}$ is a strongly continuous one-parameter unitary group on the Hilbert space $\hilbert$, there is a unique densely defined self-adjoint operator $\wh{A}$ which is the infinitesimal generator of $\(\wh{U}(t)\)_{t\in\R}$, and $\wh{U}(t) = e^{it\wh{A}}$ for all $t\in\R$.
\end{proof}

From the Stone Theorem we obtain that
the restriction of $\wh{H}_{C+R}$ to the subspace
\begin{equation}
\label{eq:CR_span}
\mc{V}_{b}:=\Span\{\wh{U}_{C+R}(\tau)\ket{\eta(0)}\ket{b}|\tau\in\R\}
\end{equation}
is unique, and therefore it is $-i\hbar\frac{\partial\ }{\partial\tau}$.

For any $b_1\neq b_2\in\mc{B}$ and $\tau_1,\tau_2\in\R$, the state vectors $\wh{U}_{C+R}(\tau_1)\ket{\eta(0)}\ket{b_1}$ and $\wh{U}_{C+R}(\tau_2)\ket{\eta(0)}\ket{b_2}$ are orthogonal, because at equal times $e^{-\frac{i}{\hbar}t\wh{H}_{R}}\ket{b_1}$ and $e^{-\frac{i}{\hbar}t\wh{H}_{R}}\ket{b_2}$ are orthogonal and $\braket{\eta(\tau_1)}{\eta(\tau_2)}=\delta\(\tau_1-\tau_2\)$.
Therefore, for any $b_1\neq b_2\in\mc{B}$, $\mc{V}_{b_1}\perp\mc{V}_{b_2}$. It follows that the total Hilbert space is the direct sum of the subspaces $\mc{V}_{b}$, $\hilbert_{C+R}=\boplus_{b\in\mc{B}}\mc{V}_{b}$, and $\wh{H}_{C+R}$ has the form $-i\hbar\frac{\partial\ }{\partial\tau}$.
Its canonical conjugate is the operator $\wh{\tau}=\wh{\tau}_{C}\otimes \wh{I}_{R}$.
\end{proof}

In particular, system $C$ may be an ideal clock.

\begin{remark}
\label{rem:clock-versatility}
Theorem \ref{thm:clock} shows that any closed quantum system combined with a {\TQS} has the Hamiltonian \eqref{eq:time_translation_Hamiltonian}. But the dynamics of the original system does not change, it simply becomes part of the translation in the new basis. When expressed in function of observables, for example in function of positions and momenta, it is still there in its original form.
\end{remark}

\begin{remark}
\label{rem:clock-translation}
Curiously, by its mere existence, a translational subsystem makes any quantum system $C+R$ translational.
However, in general, the Hamiltonian \eqref{eq:time_translation_Hamiltonian} does not require the existence of a subsystem that explicitly acts like a clock, but it is always possible to exhibit a clock, as seen in the representations from Examples \ref{ex:parmenidean} and \ref{ex:heraclitean}.
\end{remark}

\begin{remark}
\label{rem:PW}
For any state vector $\ket{\psi(0)}_{R}$, we can represent the orbit $\{\ket{t}_{C}\ket{\psi(t)}_{R}|t\in\R\}$ of $\ket{0}_{C}\ket{\psi(0)}_{R}$ under the action of the group $\(\wh{U}_{C+R}(t,0)\)_{t\in\R}$ as a vector
\begin{equation}
\label{eq:timeless}
\kett{\Psi}=\int_{\R}\ket{\tau}_{C}\ket{\psi(\tau)}_{R} d \tau,
\end{equation}
which is not normalizable.
Since the orbit is invariant under the action of the group $\(\wh{U}_{C+R}(t,0)\)_{t\in\R}$, so is $\kett{\Psi}$, and $\wh{U}_{C+R}(t,0)\kett{\Psi}=\kett{\Psi}$ for any $t\in\R$.
Therefore,
\begin{equation}
\label{eq:constraint}
\wh{H}_{C+R}\kett{\Psi}=0.
\end{equation}

This connects Theorem \ref{thm:clock} and \emph{Canonical Quantum Gravity} \cite{Dewitt1967QuantumTheoryOfGravityI_TheCanonicalTheory}, where the Wheeler-DeWitt \emph{constraint equation} has the form \eqref{eq:constraint}. Its solutions are ``timeless'', but Page and Wootters \cite{PageWootters1983EvolutionWithoutEvolution} proposed that the dynamics is encoded in $\kett{\Psi}$ as correlations between system $R$ and the clock $C$, as in equation \eqref{eq:timeless}.
As we can see from Theorem \ref{thm:clock}, not only the clock's Hamiltonian, but the total Hamiltonian $\wh{H}_{C+R}$ as well has the form from eq. \eqref{eq:time_translation_Hamiltonian} when applied to quantum states, even though for $\tau$-extended quantum states it has the form \eqref{eq:constraint}.
Equation \eqref{eq:constraint} for $\tau$-extended quantum states provides a more timeless picture, in addition to the Parmenidean representation from Example \ref{ex:parmenidean}.
\qed
\end{remark}

\section{Worlds containing a sterile massless fermion}
\label{s:weyl}

Consider a Dirac fermion $\varphi$ that is massless and sterile, \ie it does not interact at all.
Since its chiral components are mixed only by the mass term, which is absent, $\varphi(t)$ decouples into two independent Weyl spinor fields $\varphi_\pm(t):\R^3\to\C^2$, where $\R^3$ is the Euclidean space.
In the Weyl representation, the Hamiltonian operators for the chiral components are
\begin{equation}
\label{eq:weyl_hamiltonian}
\wh{H}_\pm=\pm c\hbar\left(
-\sigma^x\frac{\partial\ }{\partial x}
-\sigma^y\frac{\partial\ }{\partial y}
-\sigma^z\frac{\partial\ }{\partial z}\right),
\end{equation}
where $\sigma^x$, $\sigma^y$, $\sigma^z$ are the Pauli matrices.

Let $\varphi=\varphi_+$ (or $\varphi=\varphi_-$) be a chiral planar wave. The basis of the Euclidean space $\R^3$ can be chosen so that $\varphi$ is independent of $x$ and $y$. 
Since the space coordinates decouple from the spin degrees of freedom, for $\varphi$,
\begin{equation}
\label{eq:weyl_dir}
\wh{H}_+\varphi=-c\sigma^z \otimes\hbar\frac{\partial}{\partial z}\varphi.
\end{equation}

This is a particular case of eq. \eqref{eq:interaction_hamiltonian_xp}, so the system is a {\TQS}.
If the planar wave $\varphi(t)$ is an eigenstate of $\wh{z}$ at $t=0$, it remains an eigenstate of $\wh{z}$ at all times $t$.

From Theorem \ref{thm:clock}, any world containing such a sterile massless fermion is a {\TQS}.

\section{Quantum representation of dynamical systems}
\label{s:dyn_sys}

In this Section I show that the quantum representation of a deterministic time-reversible dynamical system without time loops has the Hamiltonian \eqref{eq:time_translation_Hamiltonian}.

A \emph{deterministic time-reversible dynamical system} \cite{BrinStuck2002IntroductionToDynamicalSystems_ABSTRACT} $(\mc{S},\alpha)$ is an action $\alpha:\R\times\mc{S}\to\mc{S}$ of the additive group $(\R,+)$ on a set $\mc{S}$.
The set $\mc{S}$, containing the \emph{states} of the system, is endowed with a Lebesgue measure $\mu(A)=\int_A s\de s$ for any measurable subset $A$ of $\mc{S}$.
The action $\alpha$ is \emph{measure-preserving}, and represents the \emph{evolution law}. For any $s\in\mc{S}$, the orbit $\{\alpha(t,s)|t\in\R\}$ of the action $\alpha$ is a \emph{history} of the system, with the initial condition fixed by $s$. A periodic orbit will be called a \emph{time loop}.

The \emph{quantum representation} of a classical Hamiltonian system was introduced, in the Heisenberg picture, by Koopman and developed by von Neumann \cite{Koopman1931HamiltonianSystemsAndTransformationInHilbertSpace,vonNeumann1932KoopmanMethod}.
Here we will use the {\schrod} picture introduced by Sch\"onberg \cite{Schonberg1952TheActualKoopmanRepApplicationOf2ndQuantizationToClassicalI,Schonberg1953TheActualKoopmanRepApplicationOf2ndQuantizationToClassicalII}, generalized to a dynamical system $(\mc{S},\alpha)$.

We associate to $(\mc{S},\alpha)$ a quantum system on the Hilbert space $\hilbert_{\mc{S}}:=L^2(\mc{S},\C)$ of square-integrable complex functions on $\mc{S}$, in the following way.

\begin{enumerate}
	\item 
We define the scalar product of $\psi_1,\psi_2\in L^2(\mc{S},\C)$ by
\begin{equation}
\label{eq:dyn_sys_scalar_prod}
\braket{\psi_1}{\psi_2}:=\int_{\mc{S}}\psi_1^\ast(s)\psi_2(s)\de s.
\end{equation}

It satisfies $\braket{\psi_1}{\psi_2}=\braket{\psi_2}{\psi_1}^\ast$ for all $\psi_1,\psi_2\in L^2(\mc{S},\C)$, as required.
	\item 
The Dirac distributions on $\mc{S}$ form a canonical basis
\begin{equation}
\label{eq:dyn_sys_rep_basis}
\left(\ket{s}\right)_{s\in\mc{S}}.
\end{equation}

We define the one-parameter unitary group $\(\wh{U}_{(\mc{S},\alpha)}(t)\)_{t\in\R}$ acting as evolution operators on the basis vectors from \eqref{eq:dyn_sys_rep_basis}, so that for any $t\in\R$ and $s\in\mc{S}$,
\begin{equation}
\label{eq:dyn_sys_rep_evol}
\wh{U}_{(\mc{S},\alpha)}(t)\ket{s}=\ket{\alpha(t,s)}.
\end{equation}
\end{enumerate}

Since for any $\psi\in L^2(\mc{S},\C)$
\begin{equation}
\label{eq:dyn_sys_wavefunction}
\ket{\psi}=\int_{\mc{S}}\psi(s)\ket{s}\de s,
\end{equation}
it follows by linearity from equation \eqref{eq:dyn_sys_rep_evol} that
\begin{equation}
\label{eq:dyn_sys_Koopman_op}
\wh{U}_{(\mc{S},\alpha)}(t)\ket{\psi}:=\int_{\mc{S}}\psi(s)\ket{\alpha(t,s)}\de s.
\end{equation}

The one-parameter unitary group \eqref{eq:dyn_sys_Koopman_op} is generated by a Hermitian operator $\wh{H}_{(\mc{S},\alpha)}$, $\wh{U}_{(\mc{S},\alpha)}(t):=e^{-\frac{i}{\hbar}t\wh{H}_{(\mc{S},\alpha)}}$.
The histories $\{\ket{\alpha(s,t)}|t\in\R\}$ satisfy the {\schrod} equation \eqref{eq:schrod-TQS} with the initial condition $\ket{\alpha(s,0)}=\ket{\alpha(s)}$.

\begin{example}
\label{ex:dyn_sys_classical_Hamiltonian_sys}
The representation was developed starting with classical Hamiltonian systems \cite{Schonberg1952TheActualKoopmanRepApplicationOf2ndQuantizationToClassicalI,Schonberg1953TheActualKoopmanRepApplicationOf2ndQuantizationToClassicalII,Koopman1931HamiltonianSystemsAndTransformationInHilbertSpace}.
The state space $\mc{S}$ is the \emph{phase space}, assumed here to be $\mc{S}=\R^{2N}$, and having canonical coordinates $(\p,\q):=(p_1,\ldots,p_N,q^1,\ldots,q^N)$.
The phase space $\mc{S}$ is a symplectic manifold, and the measure is preserved automatically by the evolution law due to Liouville's theorem.
The action $\alpha:\R\times\mc{S}\to\mc{S}$ associates to any initial state $s(0)$ the state $s(t)$ resulting from classical dynamics.
In terms of the classical Hamiltonian function $\mc{H}$, the quantum Hamiltonian is the Liouville operator,
\begin{equation}
\label{eq:hamiltonian_sys}
\wh{H}_{(\mc{S},\alpha)}=-i\hbar\sum_{j=1}^N\left(
\frac{\partial\mc{H}}{\partial p_j} \frac{\partial}{\partial q^j}
- \frac{\partial\mc{H}}{\partial q^j} \frac{\partial}{\partial p_j} 
\right).
\end{equation}
Then a classical probability density $\rho(\p,\q)$ on $\mc{S}$ satisfies the classical Liouville equation.
This extends by linearity to a complex function $\psi(\p,\q)$ on $\mc{S}$, and therefore $\ket{\psi}$ satisfies the {\schrod} equation for the Hamiltonian $\wh{H}_{(\mc{S},\alpha)}$.
\end{example}

\begin{theorem}
\label{thm:quantum_representation}
Let $(\mc{S},\alpha)$ be a deterministic time-reversible dynamical system which is not ergodic and doesn't have time loops.
Then, the Hamiltonian operator is $\wh{H}_{(\mc{S},\alpha)}=-i\hbar\frac{\partial\ }{\partial\tau}$, and at all times the states are eigenstates of its canonical conjugate $\wh{\tau}$.
\end{theorem}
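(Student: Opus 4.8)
The plan is to mimic the proof of Theorem~\ref{thm:clock}, with the $\alpha$-orbits of $\mc{S}$ now playing the role that the subspaces indexed by the basis vectors $\ket{b}$ of $\hilbert_{R}$ played there. The only place the hypotheses enter decisively is this: ``no time loops'' — no periodic orbit, and in particular no fixed point — says exactly that the action $\alpha$ is free, so that for every $s\in\mc{S}$ the map $t\mapsto\alpha(t,s)$ is injective and each orbit is a copy of $\R$ as an $\R$-space. Thus $\wh{U}(t)$ restricted to (the functions supported on) a single orbit is, structurally, the regular representation of $\R$ acting by translation, which is precisely the situation in which Stone's theorem forces the generator to be $-i\hbar\,\partial/\partial\tau$.

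Carrying this out, I would first choose a measurable transversal $\mc{B}\subseteq\mc{S}$ meeting each orbit exactly once (equivalently, a global time function $\tau:\mc{S}\to\R$ with $\tau(\alpha(t,s))=\tau(s)+t$), and parametrize the orbit through $b\in\mc{B}$ by $\tau\mapsto\alpha(\tau,b)$. Disintegrating $\mu$ along $\mc{B}$ yields a direct-integral decomposition $\hilbert_{\mc{S}}=L^2(\mc{S},\C)=\int^{\oplus}_{\mc{B}}\mc{V}_{b}\,d\nu(b)$, where $\mc{V}_{b}$ consists of the functions supported on the $b$-orbit — the continuum analogue of $\hilbert_{C+R}=\boplus_{b\in\mc{B}}\mc{V}_{b}$ in Theorem~\ref{thm:clock}. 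Since each $\wh{U}(t)$ carries every orbit to itself, the Koopman group $\(\wh{U}(t)\)_{t\in\R}$ is decomposable over this direct integral. Because $\alpha$ is measure-preserving, the measure induced on each orbit is translation-invariant in the coordinate $\tau$, hence a multiple of Lebesgue measure, so $\mc{V}_{b}\cong L^2(\R,d\tau)$ and the restriction of $\(\wh{U}(t)\)_{t\in\R}$ to $\mc{V}_{b}$ is unitarily equivalent to the translation group $\psi(\tau)\mapsto\psi(\tau-t)$. By Stone's theorem the generator of a strongly continuous one-parameter unitary group is unique, so on each $\mc{V}_{b}$ it equals $-i\hbar\,\partial/\partial\tau$ (with $\tau$ oriented so that $\wh{U}(\tau)$ shifts eigenvalues by $+\tau$, as in Definition~\ref{def:TQS}); reassembling the fibers gives $\wh{H}=-i\hbar\,\partial/\partial\tau$ on all of $\hilbert_{\mc{S}}$. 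For the remaining assertion I would let $\wh{\tau}$ act on the canonical basis by $\wh{\tau}\ket{\alpha(\tau,b)}=\tau\ket{\alpha(\tau,b)}$, i.e. as multiplication by $\tau$ on each $\mc{V}_{b}\cong L^2(\R,d\tau)$: it is self-adjoint, obeys $[\wh{\tau},\wh{H}]=i\hbar$, hence is the canonical conjugate of $\wh{H}$, and by construction every $\ket{s}$ — and therefore every evolved history $\wh{U}(t)\ket{s}=\ket{\alpha(t,s)}$ — is one of its eigenstates, with eigenvalue $\tau(s)$.

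The main obstacle is the very first step: producing the measurable transversal $\mc{B}$ (global time function) and with it the direct-integral decomposition of $\hilbert_{\mc{S}}$ over the orbit space. For a free measure-preserving flow this is where the regularity built into the $\sigma$-algebra $\ms{B}$ (and the $\sigma$-finiteness of $\mu$) is used; granting it, everything downstream — identifying the orbit measures with Lebesgue measure, invoking Stone's theorem, and checking the canonical commutation relation — is routine and parallels the argument of Theorem~\ref{thm:clock}. It is worth stressing that ``no time loops'' cannot be weakened: a periodic orbit of period $P$ would contribute a summand $L^2(\R/P\Z)$ on which the evolution has purely discrete spectrum proportional to $\Z$, which is not of the form $-i\hbar\,\partial/\partial\tau$ on $L^2(\R)$, and a fixed point would contribute a one-dimensional summand on which $\wh{H}=0$; in either case the conclusion would fail.
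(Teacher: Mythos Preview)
Your proposal is correct and follows essentially the same strategy as the paper: decompose $\hilbert_{\mc{S}}$ along the $\alpha$-orbits, observe that on each orbit the Koopman group acts as the translation group on mutually orthogonal states, and invoke Stone's theorem to identify the generator as $-i\hbar\,\partial/\partial\tau$, then assemble the orbit pieces to cover all of $\hilbert_{\mc{S}}$. The paper's own proof is much terser---it works directly with the Dirac vectors $\ket{s}$ and the spans of their histories and simply asserts that the orbit-by-orbit result ``extends to the entire Hilbert space'' without mentioning the transversal, disintegration, or direct-integral issues you carefully flag---so your version is a more rigorous rendering of the same argument rather than a different one.
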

\begin{proof}
The representation \eqref{eq:dyn_sys_Koopman_op} associates to each non-periodic orbit of the action $\alpha$ a history $\(\wh{U}_{(\mc{S},\alpha)}(t)\ket{s_0}\)_{t\in\R}$ consisting of mutually orthogonal states. Then,
the one-parameter unitary group $\(\wh{U}_{(\mc{S},\alpha)}(t)\)_{t\in\R}$ acts like a translation group on the mutually orthogonal vectors $\{\wh{U}_{(\mc{S},\alpha)}(t)\ket{s_0}|t\in\R\}$.
Therefore, the restriction of $\wh{H}_{(\mc{S},\alpha)}$ to the subspace spanned by $\(\wh{U}_{(\mc{S},\alpha)}(t)\ket{s_0}\)_{t\in\R}$ can be put in the form $-i\hbar\frac{\partial\ }{\partial\tau}$.
At all times, the states of the system are eigenstates of the canonical conjugate $\wh{\tau}$ of $-i\hbar\frac{\partial\ }{\partial\tau}$.
Since in the absence of time loops $\mc{S}$ is partitioned into non-periodic orbits, the result extends to the entire Hilbert space $\hilbert$.
\end{proof}

An observable of the original dynamical system is a function $f:\mc{S}\to\R$. Any observable can be uniquely represented by a Hermitian operator
\begin{equation}
\label{eq:dyn_sys_observable}
\wh{f}=\int_\mc{S} f(s)\ket{s}\bra{s}\de s.
\end{equation}

\begin{remark}
\label{rem:commutativity}
If the original dynamical system is a classical Hamiltonian system, the state space $\mc{S}$ is a classical phase space.
The regular multiplication of the phase space coordinates $(p_a,q^a)\in\mc{S}$ commutes, $p_a q^a=q^a p_a$.
Therefore, the corresponding operators in the quantum representation also commute, $[\wh{q}^a,\wh{p}_a]=0$. This is because the operator $\wh{p}_a$ \emph{is not} the canonical conjugate of $\wh{q}^a$! In fact, any two observables $f_1,f_2:\mc{S}\to\R$ commute, being just real functions, and the operators representing them also commute. The operators $\wh{q}^a$ and its canonical conjugate $-i\hbar\frac{\partial\ }{\partial q^a}$ satisfy the canonical commutation relations $[\wh{q}^a,-i\hbar\frac{\partial\ }{\partial q^a}]=i\hbar\wh{I}$, but $\wh{p}_a$, the classical momentum $p_a$, is different from the canonical conjugate $-i\hbar\frac{\partial\ }{\partial q^a}$, which does not represent any classical observable.
When applied to a state vector $\ket{s}$, representing the classical state $s$, the operators $-i\hbar\frac{\partial\ }{\partial q^a}$ result in superpositions of state vectors representing classical states.
The operators $\wh{f}$ representing classical observables $f:\mc{S}\to\R$, in particular the classical momenta $\wh{p}_a$, only multiply $\ket{s}$ with the value of the classical observable $f(s)$ of the classical state $s$.
\end{remark}

\section{Diversity from the physical meaning of the observables}
\label{s:HSF}

Before discussing Objection \ref{obj:simple}, let us understand it exactly.

Let us consider two {\TQS}s specified by the triples,
$(\hilbert_1,\wh{H}_1,\ket{\psi_1(t)})$ and $(\hilbert_2,\wh{H}_2,\ket{\psi_2(t)})$, with the operators $\wh{\tau}_1$ and respectively $\wh{\tau}_2$.
Due to the $\tau$-translation symmetry, all eigenspaces of the operator $\wh{\tau}$ for a {\TQS} have the same dimension, denoted here by $d(\wh{H})$, equal to the multiplicity of the energy eigenspaces. 
If $d(\wh{H}_1)=d(\wh{H}_2)$, there is a unitary map $\wh{M}:\hilbert_1\to\hilbert_2$, so that $\wh{H}_2=\wh{M} \wh{H}_1 \wh{M}^{-1}$.

\begin{proposition}
\label{thm:TQS-ambiguity}
For any fixed $t_1,t_2\in\R$, we can choose $\wh{M}$ so that $\wh{M}\ket{\psi_1(t_1)}=\ket{\psi_2(t_2)}$.
\end{proposition}
\begin{proof}
Since $\wh{M}\ket{\psi_1(t_1)},\ket{\psi_2(t_2)}\in\hilbert_2$ are eigenvectors of $\wh{\tau}_2$, let $\tau_1$ and $\tau_2$ be their eigenvalues.
Then, the vector $\wh{M}\wh{U}_1(\tau_2-\tau_1)\ket{\psi_1(t_1)}=\wh{U}_1(\tau_2-\tau_1)\wh{M}\ket{\psi_1(t_1)}\in\hilbert_2$ corresponds to the eigenvalue $\tau_2$.
The representation from Example \ref{ex:parmenidean} ensures the existence of a unitary operator $\wh{S}_2$ on $\hilbert_2$ that commutes with $\wh{H}_2$, so that $\wh{S}_2\wh{M}\wh{U}_1(\tau_2-\tau_1)\ket{\psi_1(t_1)}=\ket{\psi_2(t_2)}$. Then, all we have to do is to replace $\wh{M}$ with $\wh{S}_2\wh{M}\wh{U}_1(\tau_2-\tau_1)$.
\end{proof}

Then, since $\wh{M}\ket{\psi_1(t_1+t)}=\ket{\psi_2(t_2+t)}$ for any $t\in\R$, even the histories of the two {\TQS}s $(\hilbert_1,\wh{H}_1,\ket{\psi_1(t)})$ and $(\hilbert_2,\wh{H}_2,\ket{\psi_2(t)})$ are unitarily equivalent.

This suggests a more precise restatement of Objection \ref{obj:simple}:

\begin{objectionbis}{obj:simple}
\label{obj:HSF}
Two quantum systems specified by the triples $(\hilbert_1,\wh{H}_1,\ket{\psi_1(t)})$ and $(\hilbert_2,\wh{H}_2,\ket{\psi_2(t)})$ with the same multiplicity of the eigenspaces of $\wh{H}$ are the same.
Therefore, there is no difference between the various examples from this article (up to the multiplicity). There is basically only one such system, and it is too simple to be interesting.
\end{objectionbis}

Objection \ref{obj:HSF} is based on an assumption called the \emph{Hilbert Space Fundamentalism Thesis} (\HSF), as coined by Carroll \cite{Carroll2021RealityAsAVectorInHilbertSpace}, although this thesis was assumed under various other names \cite{Piazza2010GlimmersOfAPreGeometricPerspective,Tegmark2015ConsciousnessAsAStateOfMatter,CarrollSingh2019MadDogEverettianism,Aaronson2021TheZenAntiInterpretationOfQuantumMechanics}:
\setthesisCustomtag{HSF}
\begin{thesisCustom}
\label{thesis:HSF}
Everything about a physical system, including the $3$D-space, a preferred basis, a preferred factorization of the Hilbert space (a tensor product structure, needed to represent subsystems, {\eg} particles), emerge uniquely from the triple
\begin{equation}
	\label{eq:MQS}
	(\hilbert,\wh{H},\ket{\psi(t)}).
\end{equation}
\end{thesisCustom}

No other information is assumed to be needed, for example a preferred basis or additional observables to which we assign a particular physical meaning like position operators.
More precisely, Thesis {\HSF} states that the following are sufficient to give a complete description of a quantum system:
\begin{enumerate}
	\item 
\textbf{The dimension of the Hilbert space $\hilbert$}: \emph{``a particular choice of Hilbert space is completely specified by its dimension''} (\cite{Carroll2021RealityAsAVectorInHilbertSpace}, p. 4).
Any two Hilbert spaces of the same dimension are unitarily equivalent.
Any two infinite-dimensional separable Hilbert spaces are unitarily equivalent.
	\item 
\textbf{The spectrum of the Hamiltonian} (including the multiplicities of the eigenvalues): \emph{``the laws of physics are determined solely by the energy eigenspectrum of the Hamiltonian''} (\cite{Carroll2021RealityAsAVectorInHilbertSpace}, p. 1). This is because the Hamiltonian can be diagonalized by a unitary transformation, and there is no preferred basis.
	\item 
\textbf{The state \emph{vector} $\ket{\psi(t)}$}, taken as abstract, basis-independent vector, and not as a wavefunction $\braket{\q}{\psi(t)}$, which would assume again a preferred basis $\(\ket{\q}\)_{\q\in\mc{C}}$: \emph{``the fundamental ontology of the world is completely and exactly represented by a vector in an
abstract Hilbert space, evolving in time according to unitary {\schrod} dynamics''} (\cite{Carroll2021RealityAsAVectorInHilbertSpace}, p. 2).
\end{enumerate}

Sometimes Quantum Mechanics is introduced based only on the triple $(\hilbert,\wh{H},\ket{\psi(t)})$ from \eqref{eq:MQS}.
But when we study physical examples the observables, represented by self-adjoint operators on $\hilbert$, have attached clear physical meanings.
For example, consider the set of position observables $\wh{q}_j$, $j\in\{1,\ldots,3n\}$, where $n$ is the number of particles in a three-dimensional space.
If a transformation $\wh{S}$ of $\hilbert$ that commutes with $\wh{H}$ is applied to these observables, we obtain different observables $\wh{S}\wh{q}_j\wh{S}^{-1}$. If the Hamiltonian has a particular form in terms of the observables $\wh{q}_j$, in the new basis it will have the same form in terms of the observables $\wh{S}\wh{q}_j\wh{S}^{-1}$. But in standard Quantum Mechanics we never identify the observables $\wh{S}\wh{q}_j\wh{S}^{-1}$ with positions (unless, of course, $\wh{S}$ represents a combination between an isometry of space, a permutation of particles of the same type, and a gauge transformation).
This means that in standard Quantum Mechanics, in addition to the triple $(\hilbert,\wh{H},\ket{\psi(t)})$, we also need to identify what each observable means physically, \ie whether it is a position or a momentum observable or what other physical property it represents.
Without this, all states would be identical unit vectors.
But the association of physical meanings to observables allows us to represent a state vector $\ket{\psi}$ as a wavefunction $\psi(q_1,q_2,\ldots):=\braket{q_1,q_2,\ldots}{\psi}$.
In addition, for more than one particle, the position observables are positions of configurations of particles, so $\psi(q_1,q_2,\ldots)$ is a wavefunction in the position configuration space $\mc{C}$. To recover the $3$D-space $\R^3$, in the case of a fixed number of particles, we need a way to decompose the configuration space as a Cartesian product of $n$ copies of $\R^3$, $\mc{C}=\R^3\times\R^3\times\ldots$, and an identification between their points. This decomposition corresponds to a \emph{tensor product structure} of the total Hilbert space, $\hilbert\cong L^2(\R^3)\otimes L^2(\R^3)\otimes\ldots$.
More complex structures are built from collections of observables, for example how space is represented as a triple of possible eigenvalues of position operators.
We see that even in Non-Relativistic Quantum Mechanics we need a much richer structure than an abstract unit vector and the Hamiltonian's spectrum.

Therefore, a ``less Zen'' \cite{Aaronson2021TheZenAntiInterpretationOfQuantumMechanics} definition of a quantum theory requires more than just the triple $(\hilbert,\wh{H},\ket{\psi(t)})$:
\begin{definition}[Quantum theory]
\label{def:quantum-theory}
A \emph{quantum theory} is given by specifying
\begin{enumerate}
	\item
	A triple $(\hilbert,\wh{H},\ket{\psi(t)})$ as in equation \eqref{eq:MQS}.
	\item
	A set of self-adjoint operators $\{\wh{A}_1,\wh{A}_2,\ldots\}$ on $\hilbert$.
	\item
	Optionally, a tensor product structure $\hilbert\cong\hilbert_1\otimes\hilbert_2\otimes\ldots$, so that some of the operators can be expressed as a product of operators on the factors $\hilbert_k$ of the form
\begin{equation}
\label{eq:TPS-obs}
	\wh{A}_j=\ldots\otimes\underbrace{\wh{I}}_{k-1}\otimes\underbrace{\wh{\wt{A}}_j}_{k}\otimes\underbrace{\wh{I}}_{k+1}\otimes\ldots.
\end{equation}
That is, the observables corresponding to subsystems are also observables of the total system.
The tensor product structure can be more complicated in the case of identical particles, its definition involving also direct sums of spaces and symmetry/anti-symmetry under permutations.

The tensor product structure is optional because, as explained in \cite{ZanardiLidarLloyd2004QuantumTensorProductStructuresAreObservableInduced}, the tensor product itself can be encoded in a set of algebras of observables satisfying certain locality and completeness conditions that allow them to be interpreted as acting locally on the factors of the Hilbert space. 
	\item
	An expression of the Hamiltonian operator in terms of operators from $\{\wh{A}_1,\wh{A}_2,\ldots\}$ (for example in terms of the position and momentum operators).
	\item
	A \emph{dictionary} that associates to each operator $\wh{A}_j$ a physical meaning, \ie what physical property it represents.
	For example, the dictionary can specify what operators represent position or momentum observables, and for which particle.
	\item
Other structures, like the $3$D-space structure, can be specified too, although they can be derived from the position observables and the tensor product structure, at least in Non-Relativistic Quantum Mechanics.
\end{enumerate}
\end{definition}

And often Quantum Mechanics is presented in a way that clearly introduces the tensor product structure, and at least the position and momentum observables and other observables depending on them (like angular momentum), and the spin observables.
And sometimes it is presented only as a triple $(\hilbert,\wh{H},\ket{\psi(t)})$ as in equation \eqref{eq:MQS}.

But can we get rid of the dictionary that associates physical properties to the operators, so that all we need is the triple $(\hilbert,\wh{H},\ket{\psi(t)})$, from which they are supposed to emerge?
Thesis {\HSF} claims that all we need is the triple $(\hilbert,\wh{H},\ket{\psi(t)})$, and a recipe to recover uniquely the other structures, and their physical meaning becomes clear from these. That no tensor product structure, no position observables, no position basis, no position configuration space, and nothing of this sort based on other observables and their physical meaning need to be provided \emph{a priori}, because they can, supposedly, ``emerge'' uniquely from $(\hilbert,\wh{H},\ket{\psi(t)})$.

Thesis {\HSF} is assumed for example in \cite{Carroll2021RealityAsAVectorInHilbertSpace,Aaronson2021TheZenAntiInterpretationOfQuantumMechanics}, but also in attempts to show that preferred pointer states or a preferred basis can emerge due to decoherence \cite{Zurek1998DecoherenceEinselectionAndTheExistentialInterpretation}.
To avoid restraining the possible solutions to the problem of the preferred pointer observable, no \emph{a priori} physical meaning is assigned to the observables, leaving open the freedom to decide, after decoherence selects a pointer basis, what physical observables it represents.

Some authors interested in decoherence and the Many-Worlds Interpretation (MWI) believe that preferred pointer observables are selected by decoherence, and after that we can interpret them physically. This is why Thesis {\HSF} can be encountered among supporters of the MWI like \cite{Piazza2010GlimmersOfAPreGeometricPerspective,Tegmark2015ConsciousnessAsAStateOfMatter,CarrollSingh2019MadDogEverettianism,Aaronson2021TheZenAntiInterpretationOfQuantumMechanics}.
But not all of its supporters accept the {\HSF}. For example, Lev Vaidman is clear about the necessity of the position observables and the $3$D-space, acknowledging that the physical content is given by the wavefunction, and not merely in the state vector as an abstract unit vector \cite{Vaidman2016AllIsPsi,SEP-Vaidman2021MWI}.

Thesis {\HSF} is not exclusive to the MWI.
A commitment to Thesis {\HSF} may be observed sometimes also in the Quantum Information community, which mainly works in abstract Hilbert spaces. The operators and the state vectors are often thought of as disembodied mathematical entities, and the correspondence with physical reality is seen merely as an implementation. Since information is considered substrate-independent, whether we represent a qubit as a spin $1/2$ particle or any other two-level system becomes irrelevant, from the standpoint of the information.
In Quantum Gravity we can also see a commitment to the {\HSF}, for example by considering the ``clock ambiguity'' \cite{Albrecht1995TheoryOfEeverythingVsTheoryOfAnything,AlbrechtIglesias2008ClockAmbiguityAndTheEmergenceOfPhysicalLaws,AlbrechtIglesias2012ClockAmbiguityImplicationsNewDevelopments} to be a problem for the Page-Wootters formalism, and attempts to resolve it assuming {\HSF} as in \cite{MarlettoVedral2017EvolutionWithoutEvolutionAndWithoutAmbiguities}.
I will return to this in Section \sref{s:time-operator-clock}.

In \cite{Piazza2010GlimmersOfAPreGeometricPerspective,Tegmark2015ConsciousnessAsAStateOfMatter,CarrollSingh2019MadDogEverettianism,Carroll2021RealityAsAVectorInHilbertSpace,Aaronson2021TheZenAntiInterpretationOfQuantumMechanics}, Thesis {\HSF} is inherently associated with the MWI, perhaps also because of the problem of the preferred pointer observable.
However, as we have seen, {\HSF} is not exclusive to the MWI community.
Moreover, Carroll \& Singh's \cite{CarrollSingh2019MadDogEverettianism} attempt to reconstruct the space from the Hamiltonian and the unit vector alone, does not make use of anything specific to the MWI.
Their construction has two major steps:
\begin{enumerate}
	\item 
	Recovering the topology of space from the Hamiltonian's spectrum, based on \cite{CotlerEtAl2019LocalityFromSpectrum}.
	To this end, they consider that bounded regions of space have associated finite-dimensional Hilbert spaces, and unions of disjoint regions have associated the tensor product of their Hilbert spaces.
	In \cite{CotlerEtAl2019LocalityFromSpectrum}, it was claimed that from the Hamiltonian's spectrum and a condition involving the locality of the Hamiltonian it is possible to recover a unique tensor product structure of the Hilbert space. Carroll \& Singh use this claim (refuted mathematically in \cite{Stoica2022SpaceThePreferredBasisCannotUniquelyEmergeFromTheQuantumStructure} and \cite{Stoica2024DoesTheHamiltonianDetermineTheTPSAndThe3dSpace})) to interpret the factors in the tensor product as points or small regions in space.
	\item 
	Recovering the distances between points, based on the mutual information, which requires the state vector $\ket{\psi}$ as an input.
\end{enumerate}

As we can see, none of these steps is specific to MWI. In fact, MWI may jeopardize this construction, because different worlds may have different metric tensors, due to different distributions of the masses.
But their construction of space doesn't seem to take this into account, so it is difficult to see how it can be compatible simultaneously with both MWI and Quantum Gravity.

Only in the next step in \cite{CarrollSingh2019MadDogEverettianism}, the emergence of classicality, which corresponds to the emergence of the pointer states through decoherence, MWI is invoked.

Giddings proposed another reconstruction motivated by Quantum Gravity \cite{Giddings2019QuantumFirstGravity}.

We can identify two major types of ambiguities that can contradict Thesis {\HSF}, based on how it fails to uniquely recover a quantum theory (as in Definition \ref{def:quantum-theory}):

\begin{type}[Ambiguity between distinct theories]
\label{type:ambiguity-inter-theoretical}
The first type, already known, is the existence of \emph{dual theories}, two non-equivalent quantum theories (as in Definition \ref{def:quantum-theory}), having different types of observables and different physical meanings associated with them, represented by the same triple $(\hilbert,\wh{H},\ket{\psi(t)})$.

A simple situation is Schmelzer's construction, using the KdV equation, of examples of Hamiltonians admitting different sets of position and momentum observables, and different tensor product structures \cite{Schmelzer2009WhyTheHamiltonOperatorAloneIsNotEnough}.

A previously known and more famous example is the AdS/CFT correspondence \cite{Maldacena1999AdSCFTInitial,Kaplan2016LecturesOnAdSCFTFromTheBottomUp}.
It was mentioned in \cite{Stoica2022SpaceThePreferredBasisCannotUniquelyEmergeFromTheQuantumStructure}, \S{VI}, Problem $4$ as a different type of ambiguity than the main type considered there (Type \ref{type:ambiguity-intra-theoretical}).
In the AdS/CFT correspondence, even if the Hilbert space of the AdS theory of Quantum Gravity in the bulk is unitarily equivalent to the Hilbert space of the conformal field theory on the boundary, these theories have different physical meanings. For example their spaces have different numbers of dimensions, because the latter is the boundary of the former. Observables from the AdS theory translate into observables with different physical meaning from the CFT theory.

The existence of dualities like AdS/CFT is known, but one may think that they are exceptional situations. The name ``duality'' itself suggests that three or more distinct quantum theories represented on the same Hilbert space are probably even more exceptional.
But later in this Section we will see that there can be infinitely many quantum theories with unitarily equivalent triples $(\hilbert,\wh{H},\ket{\psi(t)})$.
\end{type}

However, the attempts to reconstruct structures like the tensor product structure and the $3$D-space are based on recipes that specify certain requirements expected for the resulting structures.
An example of such a requirement is the locality of the interaction in the underlying physical space of the theory. If the CFT theory in dimension $d$ is local, then it seems very difficult for the dual AdS theory, in a $d+1$-dimensional space, to also be local, and if they can both be local, the two characterizations of the locality are different, so local states in the bulk are not the same as local states on the boundary \cite{AlmheiriDongHarlow2015BulkLocalityAndQuantumErrorCorrectionInAdSCFT}.
It seems therefore justified to think that, by stronger specifications of the recipe to obtain a quantum theory as in Definition \ref{def:quantum-theory}, for example regarding the number of dimensions, we can exclude dualities and recover a unique quantum theory from the triple $(\hilbert,\wh{H},\ket{\psi(t)})$ alone.
This leads to another type of ambiguity:

\begin{type}[Ambiguity of recovering the same theory]
\label{type:ambiguity-intra-theoretical}
The second type of ambiguity is the existence of physically different ways to read a particular quantum theory from a given triple $(\hilbert,\wh{H},\ket{\psi(t)})$.
Even if we have a recipe to identify the expected quantum theory (as in Definition \ref{def:quantum-theory}, \ie observables and their correspondence with the physical properties from the triple $(\hilbert,\wh{H},\ket{\psi(t)})$), as long as this recipe is invariant under the transformations that preserve $(\hilbert,\wh{H},\ket{\psi(t)})$, the same recipe will give physically different results \cite{Stoica2022SpaceThePreferredBasisCannotUniquelyEmergeFromTheQuantumStructure}.
For example, a unitary transformation that commutes with the Hamiltonian can lead to different tensor product structures and position observables, resulting in non-equivalent $3$D-space structures. On these, the same unit vector $\ket{\psi(t)}$ can appear as distinct wavefunctions, representing physically distinct worlds.
\end{type}

But any recipe for the emergence of the $3$D-space or of other structures, or for the emergence of preferred pointer states, leads to Type \ref{type:ambiguity-intra-theoretical} ambiguities, as shown in \cite{Stoica2022SpaceThePreferredBasisCannotUniquelyEmergeFromTheQuantumStructure}. In fact, the same triple $(\hilbert,\wh{H},\ket{\psi(t)})$ can represent completely different physical worlds, as different as the world differs at different moments of time \cite{Stoica2022SpaceThePreferredBasisCannotUniquelyEmergeFromTheQuantumStructure}, or as different even as the worlds corresponding to different measurement outcomes can be \cite{Stoica2023PrinceAndPauperQuantumParadoxHilbertSpaceFundamentalism}.
Moreover, the claim from \cite{CotlerEtAl2019LocalityFromSpectrum}, that under appropriate conditions the tensor product structure emerges uniquely from the spectrum of the Hamiltonian, is contradicted both by the results from \cite{Stoica2022SpaceThePreferredBasisCannotUniquelyEmergeFromTheQuantumStructure}, and especially from \cite{Stoica2024DoesTheHamiltonianDetermineTheTPSAndThe3dSpace}, where it is shown that the possible physically distinct tensor product structures can be parametrized by a number of parameters that increases exponentially with the number of factors.
A proof that there are infinitely many physically distinct worlds described by the same triple $(\hilbert,\wh{H},\ket{\psi(t)})$ and the same \emph{fixed} tensor product structure is given in \cite{Stoica2023PrinceAndPauperQuantumParadoxHilbertSpaceFundamentalism}.
Therefore, there is nothing remotely unique in the structures obtained by constructions based on {\HSF}.

The abundance of physically distinct solutions represented by the same triple $(\hilbert,\wh{H},\ket{\psi(t)})$, shown in \cite{Stoica2022SpaceThePreferredBasisCannotUniquelyEmergeFromTheQuantumStructure,Stoica2023PrinceAndPauperQuantumParadoxHilbertSpaceFundamentalism,Stoica2024DoesTheHamiltonianDetermineTheTPSAndThe3dSpace}, is of Type \ref{type:ambiguity-intra-theoretical}, \ie it is about theories that have the same physical meanings associated with observables, for example the same number of space dimensions.
The difference between them is due to the fact that this association is different. Because of this, we can identify for example, in the same triple $(\hilbert,\wh{H},\ket{\psi(t)})$, different position configuration spaces and different $3$D-space structures. This implies that the same $\ket{\psi(t)}$ can correspond to different wavefunctions, and therefore it can represent different physical states.
The different ways to associate physical meaning to the operators can be related by unitary transformations that commute with the Hamiltonian, and these are very abundant.

\begin{replyToObjection}{obj:simple}
\label{reply:obj:HSF}
From the statement of Thesis {\HSF} it follows that to contradict it, it is sufficient to find two {\TQS}s that
\begin{enumerate}
	\item[1)] 
are unitarily equivalent
	\item[2)] 
but represent different physical systems.
\end{enumerate}

The following Corollary shows that the examples from this article provide infinitely many concrete counterexamples to the {\HSF} Thesis, because the physical content of the operators representing observables in the two systems can be non-equivalent.

\begin{corollary}
\label{thm:HSF}
There are infinitely many quantum systems with the same triple $(\hilbert,\wh{H},\ket{\psi(t)})$ but completely different physical content, exhibiting ambiguities of both Type \ref{type:ambiguity-inter-theoretical} and Type \ref{type:ambiguity-intra-theoretical}.
\end{corollary}
\begin{proof}
Let $(\hilbert_1,\wh{H}_{\tau_1},\ket{\psi_1(t)})$ and $(\hilbert_2,\wh{H}_{\tau_2},\ket{\psi_2(t)})$ be two {\TQS}s. 
From Example \ref{ex:parmenidean} we recall that the Hilbert space $\hilbert$ of a {\TQS} has the form $\hilbert\cong\hilbert_{C}\otimes L^2\(\R,\C\)$.
Then, let $\hilbert_1\cong\hilbert'_1\otimes L^2\(\R,\C\)$ and $\hilbert_2\cong\hilbert'_2\otimes L^2\(\R,\C\)$.
If both the Hilbert spaces $\hilbert'_1$ and $\hilbert'_2$ are finite-dimensional and have the same dimension, they are unitarily equivalent. If both are infinite-dimensional and separable, they are unitarily equivalent. In both cases, there is a unitary map $\wh{M}:\hilbert_1\to\hilbert_2$ . Due to the Stone-von Neumann Theorem (see Remark \ref{rem:motivation} or {\eg} \cite{Hall2013QuantumTheoryForMathematicians}, p. 239), $\wh{M}$ can be chosen so that $\wh{H}_{\tau_2}=\wh{M} \wh{H}_{\tau_1} \wh{M}^{-1}$ and $\wh{\tau}_2=\wh{M} \wh{\tau}_1 \wh{M}^{-1}$.
Then $\wh{M}$ maps eigenvectors of $\wh{\tau}_1$ into eigenvectors of $\wh{\tau}_2$.
Moreover, since for any fixed $t_1,t_2\in\R$, $\ket{\psi_1(t_1)}$ is an eigenvector of $\wh{\tau_1}$ and $\ket{\psi_2(t_2)}$ is an eigenvector of $\wh{\tau_2}$, the translational symmetry in the $\tau$-spaces allows $\wh{M}$ to be chosen so that $\wh{M}\ket{\psi_1(t_1)}=\ket{\psi_2(t_2)}$. Since $\wh{H}_{\tau_2}=\wh{M} \wh{H}_{\tau_1} \wh{M}^{-1}$, $\wh{M}\ket{\psi_1(t_1+t)}=\ket{\psi_2(t_2+t)}$ for any $t\in\R$, so even their histories are equivalent.

Consider any two of the examples of {\TQS}s from Sections \sref{s:measurements}--\sref{s:dyn_sys} for which $\hilbert'_1$ and $\hilbert'_2$ as above have the same dimension. Then, the two {\TQS}s are unitarily equivalent.

If the two examples of unitarily equivalent {\TQS}s have different types of observables, space dimension, or dynamics, this illustrates the existence of ambiguities of Type \ref{type:ambiguity-inter-theoretical}, or dualities, except that there are infinitely many of them, all sharing the same triple $(\hilbert_1,\wh{H}_{\tau},\ket{\psi(t)})$.

Moreover, since we can freely choose $t_1,t_2\in\R$, the first system is equivalent at any time $t_1$ with the second system at any time $t_2$, and because unitary equivalence is transitive, the first system at a given time is equivalent to itself at any other time. Therefore, the structures from Thesis {\HSF} are unable even to capture the physical differences of the same system at different times. Also since we can freely choose $\ket{\psi_1(t_1)}$ and $\ket{\psi_2(t_2)}$, as long as they are eigenvectors of $\wh{\tau_1}$, respectively of $\wh{\tau_2}$, any history of the first system is equivalent to any history of the second system, and in particular to any history of itself. Therefore, the structures from Thesis {\HSF} are unable to distinguish physical differences between two possible alternative histories of the same system, and even between two possible alternative histories of systems governed by different physical laws, as long as they are translational.
This illustrates the existence of ambiguities of Type \ref{type:ambiguity-intra-theoretical}, and of mixed ambiguities of both Type \ref{type:ambiguity-inter-theoretical} and Type \ref{type:ambiguity-intra-theoretical}.

Therefore, Thesis {\HSF} is contradicted by the existence of ambiguities between infinitely many physically distinct systems from the same or from different quantum theories (as in Definition \ref{def:quantum-theory}).

For a particular example, recall Theorem \ref{thm:quantum_representation}.
A Hamiltonian function $\mc{H}(p_a,q^a)$ is a function $\mc{H}:\mc{S}\to\R$ of the phase space coordinates $(p_a,q^a)\in\mc{S}$.
Consider two classical Hamiltonian systems with Hamiltonian functions $\mc{H}_1(p_{1 a},q_1^a)$ and $\mc{H}_2(p_{2 a},q_2^a)$. 
We assume that the two classical systems consist of point particles, and the configuration spaces are Euclidean spaces of the same dimension.
Under the conditions of Theorem \ref{thm:quantum_representation}, their quantum representations are both {\TQS}s.
Let the {\TQS}s $(\hilbert_1,\wh{H}_{\tau_1},\ket{\psi_1(t)})$ and $(\hilbert_2,\wh{H}_{\tau_2},\ket{\psi_2(t)})$ be their quantum representations. 
Then, there is a unitary map $\wh{M}:\hilbert_1\to\hilbert_2$, so that $\wh{H}_{\tau_2}=\wh{M} \wh{H}_{\tau_1} \wh{M}^{-1}$ and $\wh{M}\ket{\psi_1(t_1)}=\ket{\psi_2(t_2)}$ for any fixed $t_1,t_2\in\R$. Then, because $\wh{H}_{\tau_2}=\wh{M} \wh{H}_{\tau_1} \wh{M}^{-1}$, $\wh{M}\ket{\psi_1(t_1+t)}=\ket{\psi_2(t_2+t)}$ for any $t\in\R$, so even their histories are equivalent.

While $\wh{M}$ maps $\wh{H}_{\tau_1}$ into $\wh{H}_{\tau_2}$, it does not necessarily map $(\wh{p}_{1 a},\wh{q}_1^a)$ into $(\wh{p}_{2 a},\wh{q}_2^a)$, because it is possible that $\mc{H}_1(p_{1 a},q_1^a)\neq \mc{H}_2(p_{2 a},q_2^a)$, even though both quantum representations have translational Hamiltonians $\wh{H}_{\tau_1}=-i\hbar\frac{\partial\ }{\partial\tau_1}$ and $\wh{H}_{\tau_2}=-i\hbar\frac{\partial\ }{\partial\tau_2}$.
In general, if the two classical dynamical systems are non-equivalent under canonical transformations, the values of the observables $(\wh{p}_{1 a},\wh{q}_1^a)$ and $(\wh{p}_{2 a},\wh{q}_2^a)$ evolve differently, and are not equivalent under the unitary map $\wh{M}$.
Even if the quantum representation of different dynamical systems have the same quantum Hamiltonian, the information from the original dynamical system is not lost, because it is preserved in the corresponding observables $\wh{f}$ defined in equation \eqref{eq:dyn_sys_observable}. But this information cannot be extracted from $(\hilbert_1,\wh{H}_{\tau_1},\ket{\psi_1(t)})$ and $(\hilbert_2,\wh{H}_{\tau_2},\ket{\psi_2(t)})$ alone, we also need the observables. 

In particular, the physical spaces of the two classical Hamiltonian systems can have different dimensions, and the systems factorize differently into subsystems. To see this, consider two systems of classical particles, with different space dimensions $d_1\neq d_2$. Then, we can construct a counterexample by choosing the number of particles so that the dimensions of the two phase spaces are equal.
We can choose for example the first system to contain $n d_2$ particles, and the second system to contain $n d_1$ particles, where $n$ is a positive integer. Then, the dimensions of the phase spaces of the two systems are equal. 
If the dimension of the physical space is different for the two systems, the factorization into subsystems is also different.
The classical dynamics is different for the two systems, because the particles have different degrees of freedom available.
For example, in three dimensions the Laplace equation leads to inverse-square forces, while in four dimensions it leads to inverse-cube forces.
In the quantum representation, the difference is reflected in the quantum observables that represent the classical positions and momenta.
If there was a procedure by which space dimension can be unambiguously extracted from the Hamiltonian and the state vector, it should result in the same space dimension, but $d_1\neq d_2$.
Therefore, there are differences in the number of space dimensions, but also in the dynamics, but these differences are lost if we ignore the additional observables like the positions and momenta.
Therefore, the triples $(\hilbert_1,\wh{H}_{\tau_1},\ket{\psi_1(t)})$ and $(\hilbert_2,\wh{H}_{\tau_2},\ket{\psi_2(t)})$ are not sufficient to capture the entire physical content.

This shows that the physical content of a quantum system is almost entirely missed by the triple $(\hilbert,\wh{H},\ket{\psi(t)})$, contradicting Thesis {\HSF}. 
\end{proof}
\end{replyToObjection}

Now that we understand the role of assigning physical meanings to the operators and the possibilities resulting from this, we can address Objection \ref{obj:unrealistic}. 

\begin{replyToObjection}{obj:unrealistic}
\label{reply:obj:unrealistic}
The translational systems discussed in this article include ideal quantum measurements (Section \sref{s:measurements}), and can be as diverse as the deterministic time-reversible dynamical systems without time loops are (Section \sref{s:dyn_sys}). Moreover, any quantum system becomes a {\TQS} by the mere addition of an ideal clock (Section \sref{s:clock}), or of a massless sterile fermion as in Section \sref{s:weyl}, or by including any translational system, so the entire diversity of possible quantum theories is inherited by such translational systems.
This proves that the Hamiltonian \eqref{eq:time_translation_Hamiltonian} is very versatile, being able to describe the dynamics of very complicated worlds, much like our world, and as diverse as classical or quantum theories can be.
\qed
\end{replyToObjection}

\section{Time operator and clock ambiguity}
\label{s:time-operator-clock}

Sections \sref{s:measurements}--\sref{s:dyn_sys} provide examples of worlds of complexity similar to the complexity of our worlds, and with similar laws, and Theorems \ref{thm:measurement}--\ref{thm:quantum_representation} proved in these Sections show that they are {\TQS}s. 
This does not mean that our world is one of these worlds, but it makes the case that it might be a {\TQS}.
Hopefully these examples may increase the plausibility that our own world is a {\TQS}, but they do not enforce it.

Remarks \ref{rem:tau-and-t} and \ref{rem:interpretation-probability}, allow us to interpret $\wh{\tau}$ as a time operator, for worlds described as {\TQS}s.
However, Theorems \ref{thm:measurement}--\ref{thm:quantum_representation} neither assume, nor do they prove that \emph{we should} interpret $\wh{\tau}$ as the time operator, only that \emph{we can make this additional step}, if it turns out that such a world is a {\TQS}.
Also, the results from this article do not oppose the various other time-related operators proposed in the literature.

If a world is a {\TQS}, it is possible to interpret $\wh{\tau}$ as a time operator.
This provides new avenues to interpret the time-energy uncertainty relation on the same footing as the position-momentum uncertainty, as we expect due to special relativity.
The existence of time operators also allows for the possibility to use the time-extended wavefunction and to take into account the time when a projector associated with a measurement is applied.
But, as we have seen, the physical meaning associated with the other observables is also important. This is necessary in particular for the Page-Wootters formalism:

\begin{objection}[Clock ambiguity problem]
\label{obj:clock_ambiguity}
It was argued in \cite{Albrecht1995TheoryOfEeverythingVsTheoryOfAnything,AlbrechtIglesias2008ClockAmbiguityAndTheEmergenceOfPhysicalLaws,AlbrechtIglesias2012ClockAmbiguityImplicationsNewDevelopments} that the Page-Wootters formalism has a \emph{clock ambiguity} problem.
This ambiguity was found by choosing a different way to express the total Hilbert space $\hilbert_{C+R}=\hilbert_{C}\otimes\hilbert_{R}$ as a tensor product, $\hilbert_{C+R}=\hilbert_{C'}\otimes\hilbert_{R'}$.
\end{objection}

\begin{replyToObjection}{obj:clock_ambiguity}
\label{reply:obj:clock_ambiguity}
This ambiguity is not due to the Page-Wootters formalism, but to the assumption of Thesis {\HSF}.
In addition to the triple $(\hilbert_{C+R},\wh{H}_{C+R},\ket{\psi(t)})$, we need the observables characteristic to the system, as seen in Corollary \ref{thm:HSF}. These include $\wh{\tau}$.
The subspaces of $\hilbert_{C+R}$ of the form $\{\ket{\tau}\}\otimes\hilbert_{R}$, where $\ket{\tau}\in\hilbert_{C}$ is an eigenvector of the operator $\wh{\tau}_{C}$ on $\hilbert_{C}$, are eigenspaces of the operator $\wh{\tau}$ on $\hilbert_{C+R}$.
For a different tensor product decomposition $\hilbert_{C+R}=\hilbert_{C'}\otimes\hilbert_{R'}$, the subspaces of the form $\{\ket{t'}\}\otimes\hilbert_{R'}$, where $\ket{t'}\in\hilbert_{C'}$, are not eigenspaces of the operator $\wh{\tau}$.
In addition, we need all of the observables characterizing the system and their physical meaning, as seen from Corollary \ref{thm:HSF}.
That is, there are observables that apply only to the clock, having the form $\wh{A}_C\otimes\wh{I}_R$, and there are observables that apply only to the rest of the world, of the form $\wh{I}_C\otimes\wh{B}_R$.
They have different physical meanings. The observables of the rest of the world are, in the Page-Wootters formalism, just the observables of our world as described by standard Quantum Mechanics. They determine local algebras of observables on each factor space $\hilbert_C$, respectively $\hilbert_R$. The existence of algebras of local observables enforce a unique factorization, as shown in \cite{ZanardiLidarLloyd2004QuantumTensorProductStructuresAreObservableInduced}.
For the total system $C+R$, the unique factorization into a clock and another system enforced by the local algebras of physical observables is $\hilbert_{C+R}=\hilbert_{C}\otimes\hilbert_{R}$.
Therefore, the ambiguity is not due to the Page-Wootters formalism, as usually believed, but to the assumption that Hilbert Space Fundamentalism is true.
\qed
\end{replyToObjection}

Therefore, the Page-Wootters formalism can be applied unambiguously to the {\TQS} updated with the interpretation of $\wh{\tau}$ as a time operator and the physical interpretations of the relevant observables. It can be applied to Quantum Gravity or to describe experimental situations involving indefinite causal order \cite{BaumannKrummGuerinBrukner2022NoncausalPageWoottersCircuits,SuleymanovCohen2023QuantumFramesOfReferenceAndRelationalFlowOfTime}.

\section{Is there really a problem of unlimited energy extraction?}
\label{s:energy-extraction}

The claim from Objection \ref{obj:negative-energy-descent} is that, if the Hamiltonian of a world is not bounded from below, that world may contain systems that decay toward infinite negative energy states.
Or such a world may contain a system able to charge infinitely many batteries, by extracting energy from another system indefinitely.

\begin{replyToObjection}{obj:negative-energy-descent}
\label{reply:obj:negative-energy-descent}
Objection \ref{obj:negative-energy-descent} may apply to some quantum worlds whose Hamiltonian is not bounded from below, but does it apply to all such worlds?

We consider a world containing an ideal clock $C$, as in Section \sref{s:clock}.
We choose the clock to be a {\TQS} with the Hilbert space $L^2(\R)$, so that it doesn't have interacting parts.
Such a clock is equivalent to a scalar particle whose wavefunction is concentrated at a single point and evolves by translation in the $\tau$-space.
The Hamiltonian from equation \eqref{eq:CR} implies that the clock $C$ and the rest of the world $R$ don't interact.
From Theorem \ref{thm:clock}, the total Hamiltonian has the form $\wh{H}_{C+R} = -i\hbar\frac{\partial\ }{\partial\tau}$, so it is not bounded from below.

Without an interaction between system $R$ and the clock $C$, if the total system $C+R$ would decay indefinitely, this decay should happen within system $R$.
Recall from Reply \ref{reply:obj:clock_ambiguity} from Section \sref{s:time-operator-clock} that the physical meanings of the observables applying either only to the clock (of the form $\wh{A}_C\otimes\wh{I}_R$) or only to the rest of the world (of the form $\wh{I}_C\otimes\wh{B}_R$), enforce a unique factorization $\hilbert_{C+R}=\hilbert_{C}\otimes\hilbert_{R}$ into a clock and the rest of the world.
The reason is that all scenarios of indefinite decay require interactions. Whether a subsystem or a state identified as the vacuum state decays, or a subsystem extracts energy from another subsystem, this involves interactions.
But the clock and the rest of the world $R$ don't interact, and the clock doesn't have interacting subsystems, so all interactions should be confined within system $R$.
But if the energy of system $R$ cannot decay indefinitely, for example because its Hamiltonian is bounded from below, the same should be true for the total system $C+R$, despite the fact that its total Hamiltonian is not bounded from below.
Let us summarize these observations:
\begin{observation}
\label{obs:decay-correspondence}
Whatever way to extract energy one may conceive, the entire work of extraction will happen within system $R$. 
Then, if for whatever reason we think there is a problem with energy extraction in the total system $C+R$ because its Hamiltonian is unbounded, there should be an identical problem for system $R$, which does the actual work.
Therefore, if the Hamiltonian of system $R$ is bounded from below, indefinite energy decay cannot happen in the system $C+R$ either.
\end{observation}

It follows that each example of a Hamiltonian $\wh{H}_R$ that is bounded from below provides a counterexample to Objection \ref{obj:negative-energy-descent}.
In general, whatever reason leads us to believe that Objection  \ref{obj:negative-energy-descent} applies to any {\TQS} in general, the objection is neutralized by choosing a system $R$ not having that problem.

For example, if we think that indefinite energy extraction violates the Second Law of Thermodynamics for the total system $C+R$, system $R$ should be the one violating it, because it is closed in the sense that it does not interact with the clock.
But if $\wh{H}_R$ is bounded from below and system $R$ obeys the Second Law of Thermodynamics, the same must be true of the total system $\wh{H}_{C+R}$. The reason is that entropy is additive, {\ie} the total entropy of two systems is the sum of the entropies of the individual systems \cite{GoldsteinLebowitzTumulkaZanghi2020GibbsAndBoltzmannEntropyInClassicalAndQuantumMechanics}, and the entropy of the clock is constant.

To illustrate Observation \ref{obs:decay-correspondence} in more detail, consider two systems of the world, a system $S$ representing the source of energy that can charge the batteries, and a system $B$ that contains the batteries to be charged. 
We assume that the boundary conditions of the total system $S+B$ are such that the two subsystems start in a state in which the batteries are not charged, but all mechanisms by which they will be charged are in place. 
We assume that the total system evolves in cycles, so that after each cycle, system $S$ returns to a state that is similar to its state at the beginning of the cycle, except that its energy is lower. That is, the state of $S$ is translated along the energy spectrum with a difference of energy $-\Delta E$, where $\Delta E>0$. In the meantime, at the end of each cycle, system $B$ gains an additional amount of energy $+\Delta E$. 

To see this more concretely, let us exhibit in the Hamiltonian of the total system the free and the interaction terms,
\begin{equation}
\label{eq:SB_hamiltonian}
\wh{H}=\wh{H}_S\otimes\wh{I}_B+\wh{I}_S\otimes\wh{H}_B+\wh{H}_{\tn{int}}.
\end{equation}

Let $\ket{E_S,\A_S}_S$ and $\ket{E_B,\A_B}_B$ be the eigenvectors of the free parts of the Hamiltonian, $\wh{H}_S$ and respectively $\wh{H}_B$, where $E_S$ and $E_B$ are eigenvalues of $\wh{H}_S$ and $\wh{H}_B$, and the multiple parameters $\A_S=\(a_{S1},a_{S2},\ldots\)$ and $\A_B=\(a_{B1},a_{B2},\ldots\)$ account for the multiplicity of the energy eigenvalues.
The unit vectors composed of such eigenvectors,
\begin{equation}
\label{eq:SB_basis}
\ket{E_S,\A_S,E_B,\A_B}:=\ket{E_S,\A_S}_S\ket{E_B,\A_B}_B,
\end{equation}
form a basis of $\hilbert$.
All allowed collections of eigenvalues $(E_S,\A_S,E_B,\A_B)$ form an energy configuration space, on which the total state $\ket{\psi(t)}$ can be represented by a wavefunction
\begin{equation}
\label{eq:SB_wavefunction}
\psi_E(E_S,\A_S,E_B,\A_B,t):=\braket{E_S,\A_S,E_B,\A_B}{\psi(t)}.
\end{equation}

To repeatedly extract energy, we assume that the boundary conditions ensure that the wavefunction of the total system becomes, and after $n$ cycles of duration $\Delta t$ each,
\begin{equation}
\label{eq:SB_wavefunction_translated}
\psi_E(E_S-n\Delta E,\A_S,E_B+n\Delta E,\A_B,t_0+n\Delta t)=\psi_E(E_S,\A_S,E_B,\A_B,t_0),
\end{equation}
where $t_0$ is the initial time. In other words, the wavefunction is translated on the energy configuration space with $-n\Delta E$ along the parameter $E_S$ and with $+n\Delta E$ along the parameter $E_B$.
We can assume that system $B$ is infinitely large, and it can use the already charged batteries to produce more batteries to be charged by $S$, so that this process can go on forever.

There seems to be nothing to prevent the possibility in principle of such a scenario.

Now let us particularize this scenario by assuming that $S$ and $R$ are subsystems of a world with a clock, $C+R$.
Then, from Reply \ref{reply:obj:clock_ambiguity} to the clock ambiguity Objection \ref{obj:clock_ambiguity} from Section \sref{s:time-operator-clock},
\begin{enumerate}
	\item 
there is no interaction between $R$ and $C$,
	\item 
and the physical meaning of the observables prevent us from choosing a different factorization than $\hilbert_{C+R}=\hilbert_{C}\otimes\hilbert_{R}$ into a clock and the rest of the world.
\end{enumerate}

Therefore, the systems $S$ and $B$ can be either subsystems of the clock $C$, or subsystems of the rest of the world $R$.

But since for this example we can choose a simple clock with the Hilbert space $L^2(\R)$, the clock doesn't have interacting parts. Therefore, $S$ and $B$ can only be subsystems of $R$.
But if the Hamiltonian $\wh{H}_R$ is bounded from below, $S$ cannot be a subsystem of $R$, because its Hamiltonian is not bounded from below.
Therefore, Objection \ref{obj:negative-energy-descent} doesn't apply to the total system $C+R$ either, despite the fact that its Hamiltonian is not bounded from below.

Systems of the type $C+R$ are the relevant ones if we are interested in the interpretation of the parameter $\tau$ as time in the Page-Wootters formalism.
And since we have seen that Objection \ref{obj:negative-energy-descent} doesn't apply to such systems, this is sufficient to defend the possibility of a time operator against Objection \ref{obj:negative-energy-descent}.
\qed
\end{replyToObjection}

The lesson learned from the example of a system $C+R$ can be applied to other {\TQS}s to exclude indefinite energy decay or extraction:
\begin{observation}
\label{obs:decay-no-ambiguity}
Objection \ref{obj:negative-energy-descent} can be used against a quantum theory (as in Definition \ref{def:quantum-theory}) whose Hamiltonian is not bounded from below only if the mechanism by which the energy decays indefinitely is compatible with the physical observables of the theory.
\end{observation}

For example, the observables of a quantum theory in the Page-Wootters formalism, for example a system $C+R$, are those enforcing $\hilbert_{C+R}=\hilbert_{C}\otimes\hilbert_{R}$ as the only factorization into a clock and the rest of the world.
The physical meaning of these observables prevented the realization of the conditions necessary to apply Objection \ref{obj:negative-energy-descent} to system $C+R$, if these conditions cannot apply as well to system $R$, for example if $\wh{H}_R$ is bounded from below and $R$ satisfies the Laws of Thermodynamics.

In the following we will see a well-known example of a Hamiltonian not bounded from below, which is protected against decay only by the boundary conditions.

\begin{example}
\label{ex:obj:negative-energy-descent:unproblematic}
There are situations in which the conditions of Objection \ref{obj:negative-energy-descent} are satisfied even for Hamiltonians that were not considered problematic in the literature.

Consider a classical particle or a ball in a uniform gravitational field in the direction $z$. The gravitational acceleration along $z$ is $g$, and it is constant. The Hamiltonian function is
\begin{equation}
\label{eq:Hamiltonian-negative-potential-classical}
\mc{H}_S=\frac{p_z^2}{2m}-mgz.
\end{equation}

Because of the potential term, which is negative, this Hamiltonian function can take negative values. In general we can add a constant to make it positive, but in this example there is no such constant for all possible states of the system.
However, we can restrict the state space of the system to exclude negative energy states.

But even if the total energy is positive, if this system evolves freely, it will keep increasing its kinetic energy at the expense of its potential energy, which will decrease indefinitely.

By quantization, we obtain the Hamiltonian operator
\begin{equation}
\label{eq:Hamiltonian-negative-potential-quantum}
\wh{H}_S=-\frac{\hbar^2}{2 m} \frac{\partial^2\ }{\partial z^2}-mg\wh{z}.
\end{equation}

As in the classical case, the energy is transferred more and more from the position degree of freedom $z$ to the momentum one, $p_z$.
It should be possible then to arrange that the degrees of freedom exchanging energy belong to different systems.
For this, we can consider another system $B$, consisting of two fixed vertical rods, and an infinite number of subsystems $\ldots,B_{-1},B_0,B_1,\ldots$, placed equidistantly along one of the rods, as in Figure \ref{fig:energy-extract}. 
This should be possible in principle in Non-Relativistic Quantum Mechanics.

\begin{figure}[!ht]
\begin{center}
\includegraphics[width=0.45\textwidth]{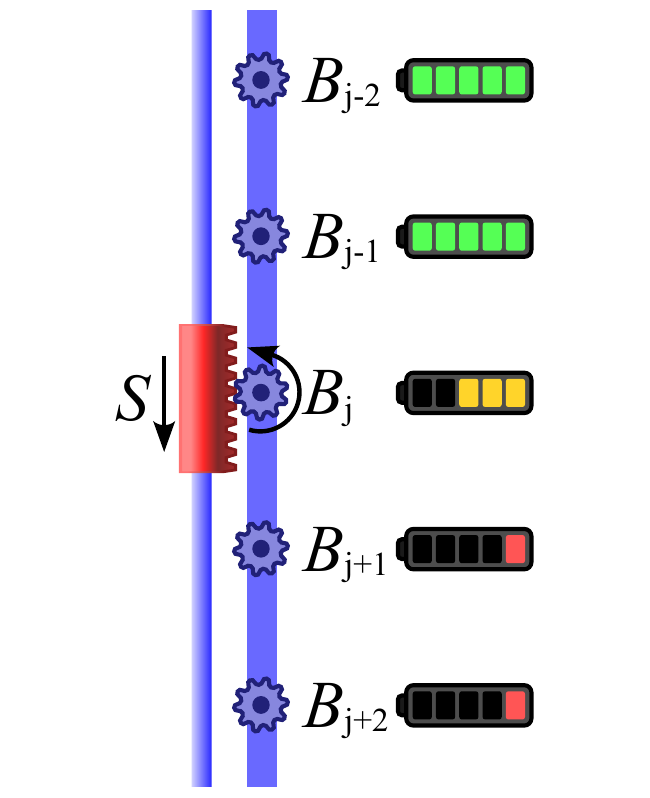}
\caption{A system with non-negative Hamiltonian, able to extract energy indefinitely and to use it to charge an unlimited number of batteries.}
\label{fig:energy-extract}
\end{center}
\end{figure}

System $S$ consists of a rack and a cylinder wrapped around one of the rods, so that it can slide along the rod due to gravity. Each system $B_j$ consists of a cogwheel coupled to a dynamo, so that as $S$ slides down, its rack turns each of the cogwheels. When a cogwheel turns, it turns the dynamo, which charges a battery.
As system $S$ falls, its interaction with the cogwheels prevents it from accelerating, so that its kinetic energy remains bounded.

Therefore, this system satisfies the cyclicity condition \eqref{eq:SB_wavefunction_translated}, and the process can repeat indefinitely.
The total Hamiltonian of the system $S+B$ can be positive, but even so its subsystem $S$ can have negative energies, as long as the total energy is positive. The negative energy manifests only in the potential energy of the subsystem $S$.

Such systems seem to be allowed in both Classical Mechanics and in Non-Relativistic Quantum Mechanics, and they don't seem to contradict the Second Law of Thermodynamics, at least not in an obvious way.
\qed
\end{example}

We can speculate that something forbids systems as in Figure \ref{fig:energy-extract}, but what?
Maybe the statistical interpretation of Thermodynamics implies that the initial conditions of the total system have to be extremely fine-tuned to allow such cyclic processes to be $100\%$ efficient.
For such a cyclic process to exist, maybe the boundary conditions would have to be conspiratorial, so that each cycle succeeds with precision, allowing the indefinite repetition of the extraction.
But maybe it is enough to monitor system $S$ well enough to be able to ensure, at each cycle, that the process does not break down, so that we can extract how much energy we want. This kind of detailed monitoring may, probably, be similar to Maxwell's demon, which was shown, based on Landauer's principle, to increase entropy in order to do the work, and therefore not violate the Second Law \cite{LeffRex2002MaxwellsDemon2}.
However, this may not be a problem if the system is able to extract energy at a sufficient rate to supply the ``Maxwell demon'' with sufficient energy to monitor and correct the process.
Maybe the reason why such systems are not realistic is that they require a sourceless potential uniformly spread everywhere, or some contraptions impossible in practice.
Whatever explanation we try to find seems to lead to the same place: reasonable boundary conditions seem to prevent such a system from extracting energy indefinitely.

In any case, Example \ref{ex:obj:negative-energy-descent:unproblematic} shows that a simple criterion to reject systems based only on the Hamiltonian's spectrum is premature.

We have seen that the fact that the Hamiltonian of a {\TQS} is not bounded from below does not necessarily lead to the problem from Objection \ref{obj:negative-energy-descent}.
We have also seen that there are quantum systems with standard Hamiltonians, which can be used to extract energy indefinitely, unless the boundary conditions somehow prevent this from happening.
Moreover, any system that is not a {\TQS} would conflict with the Second Law, by not being able to ensure irreversibility, as shown in \cite{Stoica2022ProblemOfIrreversibleChangeInQuantumMechanics}.

\section{Conclusions: Empirical adequacy of the time operator and its canonically conjugate translational Hamiltonian}
\label{s:conclusions}

The relativistic spacetime symmetry seems to suggest that, just like the position operators like $\wh{x}$ and the momentum operators like $\wh{p}_x$ are canonically conjugate, so must be the time operator $\wh{\tau}$ and the Hamiltonian $\wh{H}$.
But this implies that $\wh{H}=-i\hbar\frac{\partial\ }{\partial\tau}$.

One may expect that this can be true only for a very simple system, in which time evolution is literally a translation in a configuration space having as coordinate the parameter $\tau$, hence called translational quantum system.
By contrast, the number of fields and parameters of the Standard Model may suggest that our world is governed by a much more complicated Hamiltonian, so this doesn't seem to have the form $\wh{H}=-i\hbar\frac{\partial\ }{\partial\tau}$.
We have seen that this judgment would be too rushed, because whatever complexity we expect to find in our world, it can be found in worlds being translational quantum systems.
Any quantum world, no matter its Hamiltonian, when extended with a clock turns into a world that can be described as a translational quantum system (Section \sref{s:clock}).
The same is true for any quantum world containing a sterile massless fermion (Section \sref{s:weyl}), or a system measuring another system according to the standard measurement scheme (Section \sref{s:measurements}), and to any quantum world that happens to be the representation of a classical system evolving without cycles (Section \sref{s:dyn_sys}).

We have seen that the resolution of the apparent contradiction between the simplicity of the translational Hamiltonian and the sophistication of the world is that the sophistication doesn't come from the Hamiltonian itself, but from both the Hamiltonian and the physical meaning of the other observables (Section \sref{s:HSF}).

Another point raised against translational quantum systems is that they allow, supposedly, the extraction of indefinite amounts of energy.
The examples from this article show that if this were true, so must be the case for any quantum or classical system (Section \sref{s:energy-extraction}).
Therefore, these objections against time as a quantum operator are dissolved (Sections \sref{s:energy-extraction}, \sref{s:HSF}, and \sref{s:time-operator-clock}).

Moreover, the rich examples of translational quantum systems from this article provide infinitely many counterexamples to the Hilbert Space Fundamentalism Thesis (\HSF) (Section \sref{s:HSF}).
The resolution, as seen, is to take into account the physical meaning of the observables, because this is the source of diversity and sophistication of these examples.
Moreover, the same solution resolves the clock ambiguity objection against the Page-Wootters formalism (Section \sref{s:time-operator-clock}).

The Hamiltonian $\wh{H}=-i\hbar\frac{\partial\ }{\partial\tau}$ is sufficiently versatile to describe the dynamics of worlds of unlimited complexity and extreme diversity. Whether it applies to our world remains to be seen.
In case it does, this would not make it a ``theory of everything'', because, as seen in Corollary \ref{thm:HSF}, the physical meaning of the observables, and the dependency of the Hamiltonian of these observables, are still needed, and these observables represent most of the physical content of the theory.

Therefore, there is sufficient evidence to at least take seriously the possibility of a world governed by a deceptively simple Hamiltonian having the time operator as its conjugate.

\section*{Declarations}

\ \par
\textbf{Funding:} This research received no external funding.

\textbf{Data Availability Statement:} Data sharing is not applicable.

\textbf{Conflicts of Interest:} The author declares no conflicts of interest.

\textbf{Experiments:} The author declares that no experiments on humans or animals were conducted.

\textbf{Acknowledgement} The author thanks Basil Altaie, Almut Beige, Eliahu Cohen, Ismael Paiva, Ashmeet Singh, Michael Suleymanou, and anonymous referees, for their valuable comments and suggestions offered to a previous version of the manuscript. Nevertheless, the author bears full responsibility for the article.


\section*{References}


\end{document}